\def\len#1{{\vert{#1}\vert}}
\def\seq#1{{\overline{#1}}}
\newcommand{\state}{\mathit{state}}
\newcommand{\tape}{\mathit{tape}}
\newcommand{\stack}{\mathit{stack}}
\title{Parikh Image of Pushdown Automata}
\author{Pierre Ganty\inst{1}\fnmsep\thanks{Pierre Ganty has been supported by the Madrid Regional Government project S2013/ICE-2731, \emph{N-Greens Software - Next-GeneRation Energy-EfficieNt Secure Software}, and the Spanish Ministry of Economy and Competitiveness project No. TIN2015-71819-P, \emph{RISCO - RIgorous analysis of Sophisticated COncurrent and distributed systems}.} \and Elena Gutiérrez\inst{1,2}\fnmsep\thanks{Elena Gutiérrez is partially supported by BES-2016-077136 grant from the Spanish Ministry of Economy, Industry and Competitiveness.}}
\institute{IMDEA Software Institute, Madrid, Spain \and Universidad Politécnica de Madrid, Spain  \email{\{pierre.ganty,elena.gutierrez\}@imdea.org}}
\begin{document}

\pagestyle{headings}  %

\maketitle
\begin{abstract}
We compare pushdown automata (PDAs for short) against other representations.
First, we show that there is a family of PDAs over a unary alphabet with \(n\) states and \(p \geq 2n + 4\) stack symbols that accepts one single long word for which every equivalent context-free grammar needs \(\Omega(n^2(p-2n-4))\) variables.
This family shows that the classical algorithm for converting a PDA to an equivalent context-free grammar is optimal even when the alphabet is unary.
Moreover, we observe that language equivalence and Parikh equivalence, which ignores the ordering between symbols, coincide for this family. 
We conclude that, when assuming this weaker equivalence, the conversion algorithm is also optimal.
Second, Parikh's theorem motivates the comparison of PDAs against finite state automata.
In particular, the same family of unary PDAs gives a lower bound on the number of states of every Parikh-equivalent finite state automaton. 
Finally, we look into the case of unary deterministic PDAs.
We show a new construction converting a unary deterministic PDA into an \hyphenation{e-qui-va-lent}{equivalent} context-free grammar that achieves best known bounds.

\end{abstract}

\section{Introduction}\label{sec:intro}

Given a context-free language which representation, pushdown automata or context-free grammars, is more concise?   
This was the main question studied by Goldstine et al.~\cite{GOLD} in a paper where they introduced an infinite family of context-free languages whose representation by a pushdown automaton is more concise than by context-free grammars. 
In particular, they showed that each language of the family is accepted by a pushdown automaton with \(n\) states and \(p\) stack symbols, but every context-free grammar needs at least \(n^2p + 1\) variables if \(n>1\) (\(p\) if \(n=1\)).
Incidentally, the family shows that the translation of a pushdown automaton into an equivalent context-free grammar used in textbooks~\cite{AUTTH}, which uses the same large number of \(n^2p + 1\) variables if \(n>1\) (\(p\) if \(n=1\)), is optimal in the sense that there is no other algorithm that always produces fewer grammar variables.

Today we revisit these questions but this time we turn our attention to the unary case.
We define an infinite family of context-free languages as Goldstine et al. did but our family differs drastically from theirs.   
Given \(n\geq 1\) and \(k\geq 1\), each member of our family is given by a PDA with \(n\) states, \(p = k + 2n + 4\) stack symbols and a \emph{single input symbol}.%
\footnote{Their family has an alphabet of non-constant size.}
We show that, for each PDA of the family, every equivalent context-free \hyphenation{gra-mmar}{grammar} has \(\Omega({n^2(p -2n - 4)})\) variables.
Therefore, this family shows that the textbook translation of a PDA into a language-equivalent context-free grammar is \emph{optimal}\footnote{In a sense that we will precise in Section \ref{sec:CFG} (Remark \ref{rem:CFG}).} even when the alphabet is unary. 
Note that if the alphabet is a singleton, equality over words (two words are equal if the same symbols appear at the same positions) coincides with Parikh equivalence (two words are Parikh-equivalent if each symbol occurs equally often in both words%
\footnote{But not necessarily at the same positions, e.g. \(ab \) and \(ba\) are Parikh-equivalent.}).
Thus, we conclude that the conversion algorithm is also optimal for Parikh equivalence. We also investigate the special case of deterministic PDAs over a singleton alphabet for which equivalent context-free grammar \hyphenation{re-pre-sen-ta-tions}{representations} of small size had been defined~\cite{Chistikov2014,Pighizzini2009}. 
We give a new definition for an equivalent context-free grammar given a unary deterministic PDA. 
Our definition is constructive (as far as we could tell the result of Pighizzini~\cite{Pighizzini2009} is not) and achieves the best known bounds~\cite{Chistikov2014} by combining two known constructions.

Parikh's theorem \cite{Parikh66} states that every context-free language has the same Parikh image as some regular language. 
This allows us to compare PDAs against finite state automata (FSAs for short) for Parikh-equivalent languages.
First, we use the same family of PDAs to derive a lower bound on the number of states of every Parikh-equivalent FSA.
The comparison becomes simple as its alphabet is unary and it accepts one single word.
Second, using this lower bound we show that the 2-step procedure chaining existing constructions: 
\begin{inparaenum}[\upshape(\itshape i\upshape)]
\item translate the PDA into a language-equivalent context-free grammar~\cite{AUTTH}; and
\item translate the context-free grammar into a Parikh-equivalent FSA~\cite{IPL}  
\end{inparaenum}
yields \emph{optimal}\footnote{In a sense that we will precise in Section \ref{sec:FSA} (Remark \ref{rem:FSA}).} results in the number of states of the resulting FSA.

As a side contribution, we introduce a semantics of PDA runs as trees that we call \emph{actrees}.
The richer tree structure (compared to a sequence) makes simpler to compare each PDA of the family with its smallest grammar representation.
\paragraph{Structure of the paper.} After preliminaries in Section~\ref{sec:prelim} we introduce the tree-based semantics in~\ref{sec:actrees}.
In Section~\ref{sec:CFG} we compare PDAs and context-free grammars when they represent Parikh-equivalent languages.
We will define the infinite family of PDAs and establish their main properties.
We dedicate Section~\ref{sec:specialcase} to the special case of deterministic PDAs over a unary alphabet.
Finally, Section~\ref{sec:FSA} focuses on the comparison of PDAs against finite state automata for Parikh-equivalent languages.

\section{Preliminaries}\label{sec:prelim}
A \emph{pushdown automaton} (or PDA) is a 6-tuple \( (Q,\Sigma,\Gamma,\delta,q_0,Z_0)\) where \(Q\) is a finite nonempty set of \emph{states} including \(q_0\), the \emph{initial} state; \(\Sigma\) is the \emph{input alphabet}; \(\Gamma\) is the \emph{stack alphabet} including \(Z_0\), the \emph{initial} stack symbol; and \(\delta\) is a finite subset of \(Q\times\Gamma\times(\Sigma\cup\{\varepsilon\})\times Q\times\Gamma^*\) called the \emph{actions}. 
We write \((q, X) \hookrightarrow_b (q',\beta)\) to denote an action \((q,X,b,q',\beta) \in \delta\). 
We sometimes omit the subscript to the arrow.

An \emph{instantaneous description} (or ID) of a PDA is a pair $(q,\beta) $ where $q\in Q$ and $\beta \in \Gamma^{*}$.
We call the first component of an ID the \emph{state} and the second the \emph{stack content}.
The \emph{initial ID} consists of the initial state and the initial stack symbol for the stack content.
When reasoning formally, we use the functions \(\state\) and \(\stack\) which, given an ID, returns its state and stack content, respectively.

An action \( (q,X) \hookrightarrow_b (q',\beta)\) is \emph{enabled} at ID \(I\) if \(\state(I)=q\) and \(  (\,\stack(I)\,)_1=X\).%
\footnote{\( (w)_i \) is the \(i\)-th symbol of \(w\) if \(1\leq i \leq \len{w}\); else \( (w)_i=\varepsilon\). \(\len{w}\) is the length of \(w\).}
Given an ID \((q,X\gamma)\) enabling \( (q,X) \hookrightarrow_b (q',\beta)\), define the \emph{successor ID} to be \( (q',\beta\gamma) \).
We denote this fact as $(q,X\gamma )\vdash_b (q',\beta\gamma)$,
and call it a \emph{move} that \emph{consumes} \(b\) from the input.%
\footnote{When \(b=\varepsilon\) the move does not consume input.}
We sometimes omit the subscript of \(\vdash\) when the input consumed (if any) is not important.
Given \(n\geq0\), a \emph{move sequence}, denoted \( I_0 \vdash_{b_1} {\cdots}\vdash_{b_n} I_n \), is a finite sequence of IDs  \(I_0 I_1\ldots I_n\) such that \(I_i\vdash_{b_i} I_{i+1}\) for all \(i\). 
The move sequence \emph{consumes} \(w\) (\emph{from the input}) when \(b_1 \cdots b_n = w\). 
We concisely denote this fact as \(I_0 \vdash {\stackrel{w}{\ldots}} \vdash I_n\). A move sequence $I \vdash{\cdots}\vdash I'$ is a \emph{quasi-run} when \(\len{\stack(I)}=1\) and \(\len{\stack(I')}=0\); and a \emph{run} when, furthermore, \(I\) is the initial ID. 
Define the \emph{language} of a PDA \(P\) as 
\(L(P) = \{w \in\Sigma^{*} \mid P \text{ has a run consuming } w \}\).

The \emph{Parikh image} of a word \(w\) over an alphabet \(\{b_1, \ldots, b_n\}\), denoted by \(\lbag w \rbag\), is the vector \((x_1,\ldots, x_n) \in \mathbb{N}^n\) such that \(x_i\) is the number of occurrences of \(b_i\) in \(w\). 
The \emph{Parikh image of a language} \(L\), denoted by \(\lbag L\rbag\), is the set of Parikh images of its words. 
When \(\lbag L_1 \rbag = \lbag L_2 \rbag\), we say \(L_1\) and \(L_2\) are \emph{Parikh-equivalent}. 

We assume the reader is familiar with the basics of \emph{finite state automata} (or FSA for short) and \emph{context-free grammars} (or CFG). 
Nevertheless we fix their notation as follows.
We denote a FSA as a tuple \((Q,\Sigma,\delta,q_0,F)\) where \(Q\) is a finite set of \emph{states} including the \emph{initial} state \(q_0\) and the \emph{final} states \(F\); \(\Sigma\) is the \emph{input alphabet} and \(\delta \subseteq Q \times (\Sigma \cup \{\varepsilon\}) \times Q\) is the \emph{set of transitions}.
We denote a CFG as a tuple \( (V,\Sigma, S, R)\) where \(V\) is a finite set of \emph{variables} including \(S\) the \emph{start} variable, \(\Sigma\) is the \emph{alphabet} or set of terminals and \(R \subseteq V \times (V \cup \Sigma)^* \) is a finite set of \emph{rules}. 
Rules are conveniently denoted \(X \rightarrow \alpha\). 
Given a FSA \(A\) and a CFG \(G\) we denote their \emph{languages} as \(L(A)\) and \(L(G)\), respectively.

Finally, let us recall the translation of a PDA into an equivalent CFG.

Given a PDA \(P=(Q,\Sigma,\Gamma,\delta,q_0,Z_0)\), define the CFG \(G = (V,\Sigma,R,S)\) where 
\begin{compactitem}
	\item The set \(V\) of variables  --- often called the \emph{triples} --- is given by 
		\begin{equation}
			\{ [qXq'] \mid q,q'\in Q, X\in\Gamma \} \cup\{S\}\enspace .\label{eq:triples}
		\end{equation}
	\item The set \(R\) of production rules is given by
\begin{equation}
	\begin{split}
	&\{ S \rightarrow [q_0 Z_0 q] \mid q\in Q \} \\
	{} \cup {} &\{ [qXr_{d}] \rightarrow b [q'(\beta)_1 r_1]\ldots[r_{d-1}(\beta)_{d} r_{d}] \\
		&\qquad \mid (q,X) \hookrightarrow_b (q',\beta),d=\len{\beta}, r_1,\ldots,r_{d}\in Q \}
	\end{split}\label{eq:rulesoftriples}
\end{equation}
\end{compactitem}

For a proof of correctness, see the textbook of Ullman et al.~\cite{AUTTH}.
The previous definition easily translates into a \emph{conversion algorithm}.
Observe that the runtime of such algorithm depends polynomially on \(\len{Q}\) and \(\len{\Gamma}\), but exponentially on \(\len{\beta}\).
\section{A Tree-Based Semantics for Pushdown Automata}\label{sec:actrees}

In this section we introduce a tree-based semantics for PDA.
Using trees instead of sequences sheds the light on key properties needed to present our main results.

Given an action \(a\) denoted by \((q, X) \hookrightarrow_b (q', \beta)\), \(q\) is the \emph{source} state of \(a\), \(q'\) the \emph{target} state of \(a\), \(X\) the symbol \(a\) \emph{pops} and \( \beta \) the (possibly empty) sequence of symbols \(a\) \emph{pushes}.  

A \emph{labeled tree} \(c(t_1, \ldots , t_k)\) \((k \geq 0)\) is a finite tree whose nodes are labeled, where \(c\) is the label of the root and \(t_1,\ldots, t_k\) are labeled trees, the children of the root. 
When \(k=0\) we prefer to write \(c\) instead of \(c()\).
Each labeled tree \(t\) defines a sequence, denoted \(\seq{t}\), obtained by removing the symbols ‘(’, ‘)’ or ‘,’ when interpreting \(t\) as a string, e.g. \(\seq{c(c_1,c_2(c_{21}))} = c\, c_1\, c_2\, c_{21}\).
The \emph{size} of a labeled tree \(t\), denoted \(\len{t}\), is given by \(\len{\seq{t}}\).
It coincides with the number of nodes in \(t\).

\begin{definition}
\label{def:tree}
Given a PDA \(P\), an action-tree (or \emph{actree} for short) is a labeled tree \(a(a_1(\ldots),\ldots,a_d(\ldots))\) where \(a\) is an action of \(P\) pushing \(\beta\) with \(\len{\beta}=d\) and
each children \(a_i(\ldots)\) is an actree such that \(a_i\) pops \( (\beta)_i \) for all \(i\). %
Furthermore, an actree \(t\) must satisfy that the source state of \( (\seq{t})_{i+1}\) and the target state of \( (\seq{t})_{i}\) coincide for every \(i\).

An actree \(t\) consumes an input resulting from replacing each action in the sequence \(\seq{t}\) by the symbol it consumes (or \(\varepsilon\), if the action does not consume any).
An actree \(a(\ldots)\) is \emph{accepting} if the initial ID enables \(a\).
\end{definition}
\begin{example}
\label{ex:tree}
Consider a PDA \(P\) with actions \(a_1\) to \(a_5\) respectively given by \((q_0,  X_1)\hookrightarrow_\varepsilon (q_0, X_0\,X_0)\), \((q_0, X_0)\hookrightarrow_\varepsilon (q_1, X_1\,\star)\), \((q_1,  X_1)\hookrightarrow_\varepsilon (q_1, X_0\,X_0)\),\\\mbox{ \((q_1, X_0)\hookrightarrow_b (q_1, \varepsilon)\)} and \((q_1, \star)\hookrightarrow_\varepsilon (q_0, \varepsilon)\). 
The reader can check that the actree \( t = a_1( a_2( a_3( a_4 , a_4 ) , a_5) , a_2( a_3( a_4 , a_4 ) , a_5) )\), depicted in Figure~\ref{fig:tree}, satisfies the conditions of Definition~\ref{def:tree} where \(\seq{t}=a_1\,a_2\,a_3\,a_4\,a_4\,a_5\,a_2\,a_3\,a_4\,a_4\,a_5\), \(\len{t}=11\) and the input consumed is \(b^4\).
\begin{figure}[!ht]
\centering
\begin{tikzpicture}[level/.style={sibling distance = 2cm/#1,
  level distance = .7cm}]
	\node { \(a_1\) } 
	child { node { \(a_2\) } 
					child { node { \(a_3\) } 
									child { node { \(a_4\) } }
									child { node { \(a_4\) } }
					}
					child { node { \(a_5\) } }
	} 
	child { node { \(a_2\) } 
					child { node { \(a_3\) } 
									child { node { \(a_4\) } }
									child { node { \(a_4\) } }
					}
					child { node { \(a_5\) } }
	};
\end{tikzpicture}
\caption{Depiction of the tree \( a_1( a_2( a_3( a_4 , a_4 ) , a_5) , a_2( a_3( a_4 , a_4 ) , a_5) )\)}
\label{fig:tree}
\end{figure}
\end{example}

We recall the notion of \emph{dimension} of a labeled tree \cite{Esparza2014} and we relate dimension and size of labeled trees in Lemma~\ref{lemma:length}.
\begin{definition}
\label{def:dim}
The \emph{dimension} of a labeled tree \(t\), denoted as \(d(t)\), is inductively defined as follows. \(d(t)=0\) if \(t=c\), otherwise we have \(t=c(t_1,\ldots,t_k)\) for some \(k>0\) and
  \begin{equation*}
    d(t) = 
      \begin{cases}
        max_{i \in\{1,...,k\}} d(t_i) & \text{if there is a unique maximum},\\
        max_{i \in\{1,...,k\}} d(t_i)+1 & \text{otherwise}.
      \end{cases}
  \end{equation*}
\end{definition}

\begin{example} 
	The annotation \(\stackrel{d(t)}{t}\!\!(\ldots)\) shows
	the actree of Example~\ref{ex:tree} has dimension \(2\)
	\[ \stackrel{2}{a_1}( \stackrel{1}{a_2}( \stackrel{1}{a_3}( \stackrel{0}{a_4} , \stackrel{0}{a_4} ) , \stackrel{0}{a_5}) , \stackrel{1}{a_2}( \stackrel{1}{a_3}( \stackrel{0}{a_4} , \stackrel{0}{a_4} ) , \stackrel{0}{a_5}) )\enspace .\] \end{example}

\begin{lemma}
\(\len{t} \geq 2^{d(t)}\) for every labeled tree \(t\).
\label{lemma:length}
\end{lemma}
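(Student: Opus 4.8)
The plan is to prove the inequality by structural induction on the tree $t$, following exactly the recursive definition of dimension given in Definition~\ref{def:dim}. The base case is a single node $t = c$: here $d(t) = 0$ and $\len{t} = 1 = 2^0$, so the inequality holds (with equality).

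For the inductive step I would write $t = c(t_1, \ldots, t_k)$ with $k > 0$ and assume the claim for each subtree $t_i$, using that $\len{t} = 1 + \sum_{i=1}^{k} \len{t_i}$ so in particular $\len{t} \geq \len{t_i}$ for every $i$, and $\len{t} \geq \len{t_a} + \len{t_b}$ for any two distinct children $t_a, t_b$. I then split according to the two cases of Definition~\ref{def:dim}. If the maximum $\max_i d(t_i)$ is attained by a \emph{unique} child $t_j$, then $d(t) = d(t_j)$ and the induction hypothesis immediately gives $\len{t} \geq \len{t_j} \geq 2^{d(t_j)} = 2^{d(t)}$. If instead the maximum $m = \max_i d(t_i)$ is attained by at least two distinct children $t_a$ and $t_b$, then $d(t) = m + 1$, and applying the induction hypothesis to both yields $\len{t} \geq \len{t_a} + \len{t_b} \geq 2^m + 2^m = 2^{m+1} = 2^{d(t)}$.

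The crux of the argument, and the only step that is not immediate, is this non-unique-maximum case. The exponential jump in the bound when the dimension increases by one is paid for precisely by the existence of two distinct children of maximal dimension: these are node-disjoint branches of $t$, so their sizes add, and each contributes at least $2^m$ nodes by induction, doubling the bound to $2^{m+1}$. The unique-maximum case is trivial by comparison because the dimension does not grow there, and the extra root and sibling nodes can only help. I do not expect any genuine obstacle beyond correctly lining up the two cases of the dimension definition with the two ways of lower-bounding $\len{t}$.
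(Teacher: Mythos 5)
Your proof is correct and matches the paper's own argument almost line for line: the paper also splits the inductive step into the unique-maximum case (where $d(t)=d(t_j)$ and monotonicity of size suffices) and the non-unique-maximum case (where two disjoint children each contribute $2^{d(t)-1}$ nodes), the only cosmetic difference being that the paper phrases the induction as being on $\len{t}$ rather than as structural induction. No gap to report.
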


The proof of the lemma is given in the Appendix.
The actrees and the quasi-runs of a PDAs are in one-to-one correspondence as reflected in Theorem~\ref{thm:corespondence} whose proof is in the Appendix.
\begin{theorem}
\label{thm:corespondence}
Given a PDA, its actrees and quasi-runs are in a one-to-one correspondence.
 \end{theorem}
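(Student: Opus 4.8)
The plan is to establish the bijection by exhibiting two mutually inverse maps, one sending each actree to a quasi-run and one sending each quasi-run to an actree, and then showing their composites are the identity. The natural induction parameter is the structure of the actree (equivalently, the recursion depth of the stack), so I would set up the correspondence recursively on the tree structure rather than trying to match runs directly.

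First I would make precise the forward map, from actrees to quasi-runs. Given an accepting actree $t = a(t_1,\ldots,t_d)$ whose root action $a$ is $(q,X)\hookrightarrow_b(q',\beta)$ with $\len{\beta}=d$, the associated quasi-run should begin with the single move that applies $a$ to the ID $(q,X)$, producing $(q',\beta)$, and then process each subtree $t_i$ in left-to-right order. The key idea is that each subtree $t_i$, by Definition~\ref{def:tree}, pops the symbol $(\beta)_i$ and (by the state-matching condition on $\seq{t}$) starts in the target state reached after the previous subtree's processing. By induction each $t_i$ corresponds to a quasi-run that pops exactly $(\beta)_i$ off the top of the stack while leaving the symbols $(\beta)_{i+1}\cdots(\beta)_d\gamma$ untouched below; concatenating these move sequences, interleaved appropriately, yields a single move sequence from $(q,X\gamma)$ to an ID with empty stack contribution from $X$. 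I would verify that the enabledness condition at each step is exactly guaranteed by the state-matching requirement built into the actree definition, so that the resulting sequence really is a valid move sequence, and that it consumes the word $\seq{t}$ reads off, matching the consumed-input clause of Definition~\ref{def:tree}.

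Conversely, I would define the backward map from quasi-runs to actrees, again recursively. Given a quasi-run $I_0\vdash\cdots\vdash I_m$ with $\len{\stack(I_0)}=1$ and $\len{\stack(I_m)}=0$, the first move must apply some action $a=(q,X)\hookrightarrow_b(q',\beta)$ pushing $\beta$ with $\len{\beta}=d$. The main obstacle, and the step deserving the most care, is the \emph{factorization} of the remainder of the quasi-run into $d$ consecutive sub-quasi-runs, one responsible for removing each pushed symbol $(\beta)_i$. This requires identifying, for each $i$, the unique maximal contiguous block of moves during which the symbol originally at depth corresponding to $(\beta)_i$ (and everything pushed above it) is present and then fully consumed before the symbols below it are touched; this is the standard ``first-return'' or stack-height decomposition argument. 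I would formalize it by tracking stack height and the position of the ``witness'' symbol, arguing that because a stack symbol can only be removed when it is at the top, the removal of $(\beta)_1,\ldots,(\beta)_d$ happens in disjoint successive phases whose boundaries are exactly the times the stack returns to height $d-i$. Each phase, after stripping the common suffix, is itself a quasi-run, so the recursion applies.

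Finally I would check that the two maps are mutually inverse: composing actree-to-quasi-run with quasi-run-to-actree recovers the original tree because the factorization of the constructed move sequence recovers exactly the subtree boundaries we used to build it, and the reverse composite is the identity by the uniqueness of the stack-height decomposition. The accepting-versus-general subtlety is minor: the statement speaks of actrees and quasi-runs in general, and since a run is a quasi-run whose first ID is the initial ID, and an accepting actree is one whose root is enabled at the initial ID, the bijection restricts correctly to runs and accepting actrees. I expect the phase-decomposition uniqueness to be the crux; everything else is a routine structural induction that piggybacks on the definitions of enabledness and the state-matching condition already fixed in Definition~\ref{def:tree}.
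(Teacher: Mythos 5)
Your proposal is correct and takes essentially the same route as the paper: the paper formalizes your two recursive maps as separate ``disassembly'' and ``assembly'' lemmas for quasi-runs (Lemmas~\ref{lemma:disassembly} and~\ref{lemma:assembly} in the Appendix), where the disassembly lemma is exactly the first-return/stack-height factorization you identify as the crux. The only organizational difference is how bijectivity is concluded: the paper maintains the invariant that \(\seq{t}\) equals the action sequence of the paired quasi-run and derives injectivity in both directions by contradiction, whereas you argue the two composites are identities via uniqueness of the decomposition---equivalent arguments.
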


\section{Parikh-Equivalent Context-free Grammars}\label{sec:CFG}
In this section we compare PDAs against CFGs when they describe Parikh-equivalent languages.
We first study the general class of (nondeterministic) PDAs and, in Section~\ref{sec:specialcase}, we look into the special case of unary deterministic PDAs.

We prove that, for every \(n \geq 1\) and \(p \geq 2n + 4\), there exists a PDA with \(n\) states and \(p\) stack symbols for which every Parikh-equivalent CFG has \(\Omega({n^2(p -2n - 4)})\) variables.
To this aim, we present a family of PDAs \(P(n,k)\) where \(n\geq1\) and \(k\geq1\).
Each member of the family has \(n\) states and \(k + 2n + 4\) stack symbols, and accepts one single word over a unary input alphabet.

\subsection{The Family \texorpdfstring{\(P(n,k)\)}{P(n,k)} of PDAs}
\begin{definition}
\label{def:P}
Given natural values $n\geq1$ and $k\geq 1$, define the PDA $P(n,k)$ with states $Q=\{{q_{i}\mid {0\leq i \leq n-1}}\}$, input alphabet $\Sigma=\{b\}$, stack alphabet $\Gamma=\{S,\,\star,\,\$ \}\cup\{X_i \mid 0\leq i\leq k\}\cup \{s_i \mid 0 \leq i\leq n-1 \} \cup \{r_i \mid 0 \leq i\leq n-1 \}$, initial state \(q_0\), initial stack symbol \(S\) and actions \(\delta\)
\[
\begin{array}[t]{r@{\;}c@{\;}lr}
(q_0,S) &\hookrightarrow_b& (q_0, X_{k}\,r_0) & \\
(q_i,X_{j}) &\hookrightarrow_b& (q_i, X_{j-1}\,r_m\,s_i\,X_{j-1}\,r_m) & \forall\,i,m \in \{0, \ldots, n-1\}, \forall \, j \in \{1, \ldots,k\},\\
(q_j, s_i) &\hookrightarrow_b& (q_i, \varepsilon) & \forall i,j \in \{0, \ldots, n-1\},\\
(q_i, r_i) &\hookrightarrow_b& (q_i, \varepsilon) & \forall i \in \{0, \ldots, n-1\},\\
(q_i, X_0) &\hookrightarrow_b& (q_i, X_{k}\,\star) & \forall i \in \{0, \ldots, n-1\},\\
(q_i, X_0) &\hookrightarrow_b& (q_{i+1}, X_{k}\,\$) & \forall i \in \{0, \ldots, n-2\},\\
(q_i, \star) &\hookrightarrow_b& (q_{i-1}, \varepsilon) & \forall i \in \{1, \ldots, n-1\},\\
(q_0, \$) &\hookrightarrow_b& (q_{n-1}, \varepsilon) & \\
(q_{n-1}, X_0) &\hookrightarrow_b& (q_{n-1}, \varepsilon)\enspace
\end{array}
\]
\end{definition}

\begin{lemma}
\label{lemma:unique_w}
Given \(n\geq 1\) and \(k\geq1\), \(P(n,k)\) has a single accepting actree consuming input \(b^{N}\) where \(N\geq 2^{n^2\,k}\).
\end{lemma}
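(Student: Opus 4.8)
The plan is to convert the length claim into a statement about \emph{tree dimension} and then invoke Lemma~\ref{lemma:length}. Every action of \(P(n,k)\) consumes exactly one \(b\), so an accepting actree \(t\) consumes \(b^{\len{t}}\) and \(N=\len{t}\); it therefore suffices to exhibit a \emph{unique} accepting actree and show \(d(t)=n^2k\), whence \(N=\len{t}\geq 2^{d(t)}=2^{n^2k}\). (By Theorem~\ref{thm:corespondence}, uniqueness of the actree is the same as uniqueness of the accepting run, which is why the PDA accepts a single word.)

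The core is a forcing argument organised around an \emph{exit state}. In a valid actree, the subtree popping a symbol \(Y\) off a one-symbol stack starting in \(q_i\) is a quasi-run from \((q_i,Y)\) to \((q_e,\varepsilon)\) for a well-defined \(q_e\); I will say it \emph{goes from \(q_i\) to \(q_e\)}. First, the initial ID \((q_0,S)\) enables only \((q_0,S)\hookrightarrow_b(q_0,X_k\,r_0)\), and to later pop \(r_0\) its first child (popping \(X_k\)) must go from \(q_0\) to \(q_0\). I then prove by induction the key invariant: \emph{for every \(i,e\) and every \(j\) there is exactly one subtree popping \(X_j\) that goes from \(q_i\) to \(q_e\), and its shape is forced.} For \(j\ge 1\) the only applicable action is \((q_i,X_j)\hookrightarrow_b(q_i,X_{j-1}\,r_m\,s_i\,X_{j-1}\,r_m)\); since its exit state equals the symbol carried by the two \(r_m\)-children, matching the required target forces \(m=e\) and makes both \(X_{j-1}\)-children go from \(q_i\) to \(q_e\). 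For \(j=0\) the three actions popping \(X_0\) become mutually exclusive once \(q_e\) is fixed: the pushed \(\star\) can be popped only from a nonzero-index state and the pushed \(\$\) only from \(q_0\). Hence the \(\star\)-action is forced when \(e<n-1\), sending the inner \(X_k\) from \(q_i\) to \(q_{e+1}\); the \(\$\)-action is forced when \(e=n-1\) and \(i<n-1\), sending the inner \(X_k\) from \(q_{i+1}\) to \(q_0\); and the erasing action \((q_{n-1},X_0)\hookrightarrow_b(q_{n-1},\varepsilon)\) is forced, ending the recursion, exactly when \(e=i=n-1\).

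These transitions on the pair \((i,e)\) are precisely a base-\(n\) counter with low digit \(e\): starting at \((0,0)\) they run through all \(n^2\) pairs in lexicographic order and halt at \((n-1,n-1)\). This gives well-foundedness, hence existence, and, with the invariant, uniqueness of the accepting actree. Writing the phases as \(\phi_0=(0,0),\dots,\phi_{n^2-1}=(n-1,n-1)\), a clean induction computes the dimension. Within a phase the \(k\) branching actions on \(X_k,\dots,X_1\) each duplicate an \(X_{j-1}\)-subtree, producing two children of equal (hence non-unique) maximal dimension, so by Definition~\ref{def:dim} the dimension rises by one; thus the \(X_k\)-block of a phase has dimension \(k\) plus that of its \(X_0\)-block. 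The wrapper at \(X_0\) has two children, the inner \(X_k\)-block of the next phase (dimension \(\ge k\ge 1\)) and a single \(\star\)- or \(\$\)-leaf (dimension \(0\)): the unique maximum passes the dimension on unchanged. The terminal phase bottoms out at the single-node erasing action, so its \(X_k\)-block has dimension \(k\). Consequently each phase adds exactly \(k\), giving \(d(t)=n^2k\) and \(N\geq 2^{n^2k}\).

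The main obstacle is the forcing argument of the second paragraph: one must track the exit state faithfully and verify that at every node the state-threading condition of Definition~\ref{def:tree}, together with the constraint that each of \(\star,\$,r_i,s_i\) can be popped only from a specific state, leaves exactly one admissible action. Once this invariant is established, both the uniqueness of the accepting actree and the exact value \(d(t)=n^2k\) follow by the two routine inductions sketched above, and Lemma~\ref{lemma:length} completes the proof.
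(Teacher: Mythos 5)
Your proof is correct, and its core coincides with the paper's: the forcing of every child by the facts that \(r_m\) can only be popped from \(q_m\), that popping \(s_i\) from \(q_m\) returns to \(q_i\), and that \(\star\) can only be popped from a state of nonzero index while \(\$\) can only be popped from \(q_0\), is exactly what the paper establishes by a case analysis on the five possible root labels. Where you genuinely diverge is in how uniqueness, termination and the dimension are then extracted. The paper provisionally assumes uniqueness, proves finiteness by contradiction --- K\"onig's lemma plus a strict partial order on labels, showing that no label can repeat along an infinite path --- and computes \(d(t)=n^2k\) by counting the \(n^2k\) distinct ``branching'' labels occurring in the tree. You instead parametrize subtrees by the pair (start state, exit state) and observe that these pairs advance like a base-\(n\) counter through all \(n^2\) values in lexicographic order, halting at \((n-1,n-1)\); this makes the construction well-founded by inspection, so existence, uniqueness and finiteness come out of a single induction with no appeal to K\"onig's lemma, and the same induction yields the dimension (each phase contributes exactly \(k\), and the \(X_0\)-wrapper passes the dimension on unchanged because its other child is a leaf). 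Your route buys a constructive, self-contained argument and an explanation of \emph{why} the accepted word is long --- the PDA literally implements a base-\(n\) counter; the paper's route buys locality: its partial-order argument only ever compares a parent label with a child label and never needs the global phase structure.
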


\begin{proof}

Fix values \(n\) and \(k\) and refer to the member of the family \(P(n,k)\) as \(P\).
We show that \(P\) has exactly one accepting actree.
We define a witness labeled tree \(t\) inductively on the structure of the tree.
Later we will prove that the induction is finite.
First, we show how to construct the root and its children subtrees.
This corresponds to case 1 below.
Then, each non-leaf subtree is defined inductively in cases 2 to 5.
Note that each non-leaf subtree of \(t\) falls into one (and only one) of the cases.
In fact, all cases are disjoint, in particular 2, 4 and 5.
The reverse is also true: all cases describe a non-leaf subtree that does occur in \(t\).
Finally, we show that each case describes \emph{uniquely} how to build the next layer of children subtrees of a given non-leaf subtree.

\begin{compactenum}
  \item \(t=a(a_1(\ldots), a_2)\) where \(a = (q_0,S)\hookrightarrow_b (q_0, X_{k}\,r_0)\) and \(a_1(\ldots)\) and \(a_2\) are of the form:
  \begin{align*}
      a_2 &= (q_0, r_0)\hookrightarrow_b (q_0, \varepsilon) &\text{only action popping \(r_0\)}\\
      a_1 &= (q_0, X_{k})\hookrightarrow_b (q_0, X_{k-1}\,r_0\,s_0\,X_{k-1}\,r_0)&\text{only way to enable  \(a_2\).}
    \end{align*}

   Note that the initial ID \((q_0, S)\) enables \(a\) which is the only action of \(P\) with this property. Note also that \(\stackrel{d}{a}(\stackrel{d}{a_1}(\ldots),\stackrel{0}{a_2})\) holds, where \(d> 0\).

	\item Each subtree whose root is labeled \(a = (q_i,X_j) \hookrightarrow_b (q_i, X_{j-1}\,r_m\,s_i\,X_{j-1}\,r_m)\) with \(i,m \in \{0,\ldots,n-1\}\) and \( j \in \{2, \ldots, k\}\) has the form \(a(a_1(\ldots),a_2,a_3,a_1(\ldots),a_2)\) where
		\begin{align*}
			a_2 &= (q_m, r_m)\hookrightarrow_b (q_m, \varepsilon) &\text{only action popping \(r_m\)}\\
			a_3 &= (q_m, s_i)\hookrightarrow_b (q_i,\varepsilon) &\text{only action popping \(s_i\) from \(q_m\)}\\
			a_1 &= (q_i, X_{j-1})\hookrightarrow_b (q_i, X_{j-2}\,r_m\,s_i\,X_{j-2}\,r_m)&\text{only way to enable  \(a_2\).}
		\end{align*}

		Assume for now that \(t\) is unique. Therefore, as the 1st and 4th child of \(a\) share the same label \(a_1\), they also root the same subtree. Thus, it holds (\(d > 0\)) \[\stackrel{d+1}{a}(\stackrel{d}{a_1}(\ldots),\stackrel{0}{a_2},\stackrel{0}{a_3},\stackrel{d}{a_1}(\ldots),\stackrel{0}{a_2})\enspace .\]
	
  \item Each subtree whose root is labeled \(a = (q_i, X_0)\hookrightarrow_b (q_{i+1}, X_{k}\,\$)\) with \(i \in \{0, \dots, n-2\}\) has the form \(a(a_1(\ldots),a_2)\) where
		\begin{align*}
			a_2 &= (q_0,\$)\hookrightarrow_b (q_{n-1},\varepsilon) &\text{only action popping \(\$\)}\\
			a_1 &= (q_{i+1},X_{k})\hookrightarrow_b (q_{i+1}, X_{k-1}\,r_0\,s_{i+1}\,X_{k-1}\,r_0) &\text{only way to enable \(a_2\).} 
		\end{align*}

    Note that \(\stackrel{d}{a}(\stackrel{d}{a_1}(\ldots),\stackrel{0}{a_2})\) holds, where \(d> 0\).
	\item Each subtree whose root is labeled \(a = (q_i,X_1) \hookrightarrow_b (q_i, X_0\,r_m\,s_i\,X_0\,r_m)\) with \(i \in \{0, \dots, n-1\}\) and \(m \in \{0,\ldots, n-2\}\) has the form \[a(a_1(a_{11}(\ldots),a_{12}),a_2,a_3,a_1(a_{11}(\ldots),a_{12}),a_2)\enspace .\] where
		\begin{align*}
			a_2 &= (q_m, r_m)\hookrightarrow_b (q_m, \varepsilon) &\text{only action popping \(r_m\)}\\
			a_3 &= (q_m, s_i)\hookrightarrow_b (q_i,\varepsilon) &\text{only action popping \(s_i\) from \(q_m\)}\\
			a_1 &= (q_i, X_0)\hookrightarrow_b (q_i, X_{k}\,\star) &\text{assume it for now}\\
			a_{12} &= (q_{m+1},\star)\hookrightarrow_b(q_m,\varepsilon) &\text{only way to enable \(a_2\)}\\
			a_{11} &= (q_i,X_{k})\hookrightarrow_b (q_i, X_{k-1}\, r_{m+1}\, s_i\, X_{k-1}\, r_{m+1}) &\text{only way to enable \(a_{12}\).}
		\end{align*}

		Assume \(a_1\) is given by the action \((q_i, X_0)\hookrightarrow_b (q_{i+1}, X_{k}\,\$)\) instead.
		Then following the action popping \(\$\), we would end up in the state \(q_{n-1}\), not enabling \(a_2\)  since \(m < n-1\).

    Again, assume for now that \(t\) is unique. Hence, as the 1st and 4th child of \(a\) are both labeled by \(a_1\), they root the same subtree. Thus, it holds (\(d>0\)) \[\stackrel{d+1}{a}(\stackrel{d}{a_1}(\stackrel{d}{a_{11}}(\ldots),\stackrel{0}{a_{12}}),\stackrel{0}{a_2},\stackrel{0}{a_3},\stackrel{d}{a_1}(\stackrel{d}{a_{11}}(\ldots),\stackrel{0}{a_{12}}),\stackrel{0}{a_2})\enspace .\]

	\item Each subtree whose root is labeled \(a = (q_i,X_1) \hookrightarrow_b (q_i, X_0\,r_{n-1}\,s_i\,X_0\,r_{n-1})\) with \(i \in \{0, \ldots, n-1\}\) has the form \(a(a_1(\ldots),a_2,a_3,a_1(\ldots),a_2)\) where
		\begin{align*}
			a_2 &= (q_{n-1}, r_{n-1})\hookrightarrow_b (q_{n-1},\varepsilon) &\text{only action popping \(r_{n-1}\)}\\
			a_3 &= (q_{n-1}, s_i)\hookrightarrow_b (q_i,\varepsilon)  &\text{only action popping \(s_i\) from \(q_i\)}\\
			a_1 &= \begin{cases}
          (q_i, X_0)\hookrightarrow_b (q_{i+1}, X_{k}\,\$)& \text{if}~i < n-1\\
          (q_{n-1}, X_0)\hookrightarrow_b (q_{n-1}, \varepsilon) & \text{otherwise}
          \end{cases} & \text{Assume it for now.}
		\end{align*}

		For both cases (\(i<n-1\) and \(i = n-1\)), assume \(a_1\) is given by \((q_i, X_0)\hookrightarrow_b(q_i, X_{k}\,\star)\) instead.
		Then, the action popping \(\star\) must end up in the state \(q_{n-1}\) in order to enable \(a_2\), i.e., it must be of the form \((q_n, \star) \hookrightarrow_b(q_{n-1},\varepsilon)\). 
		Hence the action popping \(X_{k}\) must be of the form \((q_i,X_{k}) \hookrightarrow_b (q_i, X_{k-1}\,r_m\,s_i\,X_{k-1}\,r_m)\) where necessarily \(m = n\), a contradiction (the stack symbol \(r_n\) is not defined in \(P\)).

Assume for now that \(t\) is unique. Then, as the 1st and 4th child of \(a\) are labeled by \(a_1\), they root the same subtree (possibly a leaf). Thus, it holds (\(d\geq 0\)) \[\stackrel{d+1}{a}(\stackrel{d}{a_1}(\ldots),\stackrel{0}{a_2},\stackrel{0}{a_3},\stackrel{d}{a_1}(\ldots),\stackrel{0}{a_2})\enspace .\]
\end{compactenum}

We now prove that \(t\) is finite by contradiction. 
Suppose \(t\) is an infinite tree. 
König's Lemma shows that \(t\) has thus at least one infinite path, say \(p\), from the root. 
As the set of labels of \(t\) is finite then some label must repeat infinitely often along \(p\). 
Let us define a strict partial order between the labels of the non-leaf subtrees of \(t\). 
We restrict to the non-leaf subtrees because no infinite path contains a leaf subtree. Let \(a_1(\ldots)\) and \(a_2(\ldots)\) be two non-leaf subtrees of \(t\). 
Let \(q_{i_1}\) be the source state of \(a_1\) and \(q_{f_1}\) be the target state of the last action in the sequence \(\seq{a_1(\ldots)}\).
Define \(q_{i_2}, q_{f_2}\) similarly for \(a_2(\ldots)\).
Let \(X_{j_1}\) be the symbol that \(a_1\) pops and \(X_{j_2}\) be the symbol that \(a_2\) pops.  Define \(a_1 \prec a_2\) if{}f
\begin{inparaenum}[\upshape(\itshape a\upshape)]
  \item either \(i_1 < i_2\),
  \item or \(i_1 = i_2\) and \(f_1 < f_2\),
  \item or \(i_1 = i_2, f_1 = f_2 \) and \(j_1 > j_2\).
\end{inparaenum}
First, note that the label \(a\) of the root of \(t\) (case 1) only occurs in the root as there is no action of \(P\) pushing \(S\).
Second, relying on cases 2 to 5, we observe that every pair of non-leaf subtrees \(a_1(\ldots)\) and \(a_2(\ldots)\) (excluding the root) such that \(a_1(\ldots)\) is the parent node of \(a_2(\ldots)\) verifies \mbox{\(a_1(\ldots) \prec a_2(\ldots)\)}.
Using the transitive property of the strict partial order \(\prec\), we conclude that \hyphenation{e-ve-ry}{every}pair of subtrees \( a_1(\ldots)\) and \( a_2(\ldots) \) in \(p\)  such that \( a_1(\ldots a_2(\ldots) \ldots) \) verifies \(a_1(\ldots) \prec a_2(\ldots) \). 
Therefore, no repeated variable can occur in \(p\) (contradiction).
We conclude that \(t\) is finite. 

The reader can observe that \(t = a(\ldots)\) verifies all conditions of the definition of actree (Definition~\ref{def:tree}) and the initial ID enables \(a\), thus it is an accepting actree of \(P\).
Since we also showed that no other tree can be defined using the actions of \(P\), \(t\) is unique.

Finally, we give a lower bound on the length of the word consumed by \(t\). 
To this aim, we prove that \(d(t) = n^2\,k\). 
Then since all actions consume input symbol \(b\), Lemma~\ref{lemma:length} shows that the word \(b^N\) consumed is such that \(N \geq 2^{n^2\,k}\).

Note that, if a subtree of \(t\) verifies case \(1\) or \(3\), its dimension remains the same w.r.t. its children subtrees.
Otherwise, the dimension always grows.
Recall that all cases from 1 to 5 describe a set of labels that does occur in \(t\).
Also, as \(t\) is unique, no path from the root to a leaf repeats a label.
Thus, to compute the dimension of \(t\) is enough to count the number of distinct labels of \(t\) that are included in cases \(2\), \(4\) and \(5\), which is equivalent to compute the size of the set
\[D = \{(q_i,X_j) \hookrightarrow (q_i, X_{j-1}\,r_m\,s_i\,X_{j-1}\,r_m) \mid 1 \leq j \leq k,\,0 \leq i, m \leq n-1\}\enspace .\]
Clearly \(\len{D} = n^2\,k\) from which we conclude that \(d(t) = n^2\,k\). Hence, \(\len{t} \geq 2^{n^2\,k}\) and therefore \(t\) consumes a word \(b^{N}\)
where \(N \geq 2^{n^2\,k}\) since each action of \(t\) consumes a \(b\).
\qed
\end{proof}

The reader can find in the Appendix a depiction of the accepting actree corresponding to \(P(2,1)\).

\begin{theorem} \label{thm:parikh_cfg}
For each \(n \geq 1\) and \(p> 2n+4\), there is a PDA with \(n\) states and \(p\) stack symbols for which every Parikh-equivalent CFG has \(\Omega(n^2(p - 2n - 4))\) variables.
\end{theorem}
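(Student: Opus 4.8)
The plan is to use $P(n,k)$ as the witness and to reduce the statement to a lower bound on the number of variables that any context-free grammar needs in order to produce a single, exponentially long unary word. First I fix $n \geq 1$ and $p > 2n+4$, set $k = p - 2n - 4 \geq 1$, and take the PDA $P(n,k)$ of Definition~\ref{def:P}; by construction it has exactly $n$ states and $k + 2n + 4 = p$ stack symbols. By Lemma~\ref{lemma:unique_w} it has a single accepting actree, which under the correspondence of Theorem~\ref{thm:corespondence} is a single accepting run, so $L(P(n,k)) = \{b^{N}\}$ for some $N \geq 2^{n^2 k}$. Because the input alphabet $\{b\}$ is unary, Parikh equivalence and language equivalence coincide: a language with Parikh image $\lbag\{b^{N}\}\rbag$ has all its words of $b$-count $N$ and is nonempty, hence equals $\{b^{N}\}$. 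Thus every Parikh-equivalent CFG $G$ satisfies $L(G) = \{b^{N}\}$.

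The core of the proof is the grammatical claim that a CFG generating the single word $b^{N}$, and whose rules have right-hand sides of length at most a fixed constant $\ell$, uses at least $\log_\ell N$ variables. To establish it I first remove useless variables, which preserves $L(G)$ and does not increase the variable count $v$, so that $G$ becomes reduced. Since $L(G) = \{b^{N}\}$ is finite, $G$ must be non-recursive: a useful variable with $A \Rightarrow^{+} \alpha\, A\, \beta$ could be iterated to yield unary words of unbounded length, contradicting finiteness. In a reduced non-recursive grammar no variable repeats along any root-to-leaf path of a derivation tree, so every derivation tree has height at most $v$; with branching at most $\ell$ such a tree has at most $\ell^{v}$ leaves, and since the (non-$\varepsilon$) terminal leaves of a derivation tree for $b^{N}$ number exactly $N$, we get $N \leq \ell^{v}$, that is $v \geq \log_\ell N$.

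Combining the two parts yields $v \geq \log_\ell N \geq \log_\ell 2^{n^2 k} = n^2 k / \log_2 \ell$. Since $\ell$ is a constant fixed by the grammar-size convention --- the same convention under which the textbook conversion \eqref{eq:rulesoftriples} produces rules of bounded length --- the factor $1/\log_2\ell$ is absorbed into the asymptotics, giving $v = \Omega(n^2 k) = \Omega(n^2(p - 2n - 4))$, as claimed.

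I expect the main obstacle to be the treatment of the right-hand-side length convention rather than any combinatorial difficulty. With fully unrestricted rules the statement is false, since the one-rule grammar $S \to b^{N}$ already generates $\{b^{N}\}$ with a single variable; the lower bound is only meaningful once grammars are measured with bounded right-hand sides (e.g.\ Chomsky Normal Form), and the delicate point is to pin down this convention precisely and to check that it is exactly the one against which the conversion algorithm is compared, which is the purpose of Remark~\ref{rem:CFG}. The remaining implications --- finiteness $\Rightarrow$ non-recursiveness $\Rightarrow$ height bounded by the variable count --- are standard and only require care with the constants swallowed by $\Omega$.
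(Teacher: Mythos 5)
Your reduction is exactly the paper's: you take \(P(n,k)\) with \(k = p-2n-4\), invoke Lemma~\ref{lemma:unique_w} to get \(L(P(n,k)) = \{b^N\}\) with \(N \geq 2^{n^2k}\), and note that over a unary alphabet Parikh equivalence collapses to language equivalence. Where you genuinely diverge is the grammatical lower bound. The paper cites the smallest-grammar result~\cite{Charikar2005}: every CFG generating a single word of length \(N\) has \emph{size} (sum of the lengths of all rules) \(\Omega(\log N)\), and then reads off the bound on the number of variables. You instead prove a variable bound directly: reduce the grammar, argue it is non-recursive, bound derivation-tree height by the number of variables \(v\), and count leaves to get \(N \leq \ell^{v}\) where \(\ell\) bounds right-hand-side lengths. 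Your route is self-contained, and it has the merit of making explicit a hypothesis the paper leaves implicit: without a bound on right-hand-side lengths the variable-count statement is simply false (\(S \to b^N\) is a one-variable grammar for \(\{b^N\}\)), and the paper's own final inference --- from a lower bound on \emph{size} to a lower bound on the number of \emph{variables} --- silently needs the same convention. So your closing paragraph about conventions is not a side remark; it is the honest account of what both proofs require.

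There is, however, one step of your argument that is false as stated: ``since \(L(G) = \{b^{N}\}\) is finite, \(G\) must be non-recursive.'' A reduced grammar for a finite language can be recursive: the grammar with rules \(S \to S\) and \(S \to b\) is reduced, generates \(\{b\}\), and satisfies \(S \Rightarrow^{+} S\). Your iteration argument breaks exactly when the context \(\alpha\beta\) in \(A \Rightarrow^{+} \alpha A \beta\) derives only \(\varepsilon\), and in that case derivation trees of unbounded height exist, so ``every derivation tree has height at most \(v\)'' fails as well. Two standard patches: (i) first eliminate \(\varepsilon\)-rules and unit rules --- both preserve the language \(\{b^{N}\}\) (as \(N\geq 1\)), add no variables, and do not increase the maximal right-hand-side length; in the resulting grammar every recursion \(A \Rightarrow^{+} \alpha A \beta\) has \(\alpha\beta\) deriving a nonempty terminal string, so your pumping argument resumes; or (ii) drop non-recursiveness altogether and work with a node-minimal derivation tree \(T\) of \(b^{N}\): if a variable \(A\) repeats along a path of \(T\), splicing the lower \(A\)-subtree in place of the upper one yields a derivation tree of some \(b^{N'}\) with \(N' \leq N\); since \(L(G)=\{b^{N}\}\) forces \(N'=N\), the spliced tree derives the same word with strictly fewer nodes, contradicting minimality. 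Hence \emph{some} derivation tree of \(b^{N}\) has height at most \(v\), which is all your leaf count \(N \leq \ell^{v}\) needs. With either patch, your proof is correct and yields \(v = \Omega(n^{2}k) = \Omega(n^{2}(p-2n-4))\) under the bounded right-hand-side convention.
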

\begin{proof}
Consider the family of PDAs \(P(n,k)\) with \(n \geq 1\) and \(k \geq 1\) described in Definition~\ref{def:P}.
Fix \(n\) and \(k\) and refer to the corresponding member of the family as \(P\).

First, Lemma \ref{lemma:unique_w} shows that \(L(P)\) consists of a single word \(b^N\) with \(N \geq 2^{n^2\,k}\). It follows that a language \(L\) is Parikh-equivalent to \(L(P)\) if{}f \(L\) is language-equivalent to \(L(P)\).

Let \(G\) be a CFG such that \(L(G) = L(P)\). 
The smallest CFG that generates exactly one word of length \(\ell\) has size \(\Omega(log(\ell))\)~\cite[Lemma~1]{Charikar2005}, where the size of a grammar is the sum of the length of all the rules. 
It follows that \(G\) is of size \(\Omega(log(2^{n^2k})) = \Omega({n^2k})\). 
As \(k = p - 2n - 4\), then \(G\) has size \(\Omega({n^2(p - 2n - 4)})\). 
We conclude that \(G\) has \(\Omega({n^2\,(p - 2n - 4)})\) variables.
\qed
\end{proof}

\begin{remark}
\label{rem:CFG}
According to the classical conversion algorithm, every CFG that is equivalent to \(P(n,k)\) needs at least \(n^2(k + 2n + 4) \in \mathcal{O}(n^2k + n^3)\) \hyphenation{va-ria-bles}{variables}. On the other hand, Theorem \ref{thm:parikh_cfg} shows that a lower bound for the number of variables is \(\Omega({n^2k})\).
We observe that, as long as \(n \leq Ck\) for some positive constant \(C\), the family \(P(n,k)\) shows that the conversion algorithm is optimal \footnote{Note that if \(n \leq Ck\) for some \(C>0\) then the \(n^3\) addend in \(\mathcal{O}(n^2k + n^3)\) becomes negligible compared to \(n^2k\), and the lower and upper bound coincide.} in the number of variables when assuming both language and Parikh equivalence.
Otherwise, the algorithm is not optimal as there exists a gap between the lower bound and the upper bound. For instance, if \(n = k^2\) then the upper bound is \mbox{\(\mathcal{O}(k^{5} + k^6) = \mathcal{O}(k^6)\)} while the lower bound is \(\Omega(k^5)\).
\end{remark}
\subsection{The Case of Unary Deterministic Pushdown Automata}\label{sec:specialcase}

We have seen that the classical translation from PDA to CFG is optimal in the number of grammar variables for the family of unary nondeterministic PDA \(P(n,k)\) when \(n\) is in linear relation with respect to \(k\) (see Remark \ref{rem:CFG}).
However, for unary \emph{deterministic} PDA (UDPDA for short) the situation is different.
Pighizzini~\cite{Pighizzini2009} shows that for every UDPDA with \(n\) states and \(p\) stack symbols, there exists an equivalent CFG with at most \(2np\) variables.
Although he gives a definition of such a grammar, we were not able to extract an algorithm from it.
On the other hand, Chistikov and Majumdar~\cite{Chistikov2014} give a polynomial time algorithm that transforms a UDPDA into an equivalent CFG going through the construction of a pair of straight-line programs.
The size of the resulting CFG is linear in that of the UDPDA.

We propose a new polynomial time algorithm that converts a UDPDA with \(n\) states and \(p\) stack symbols into an equivalent CFG with \(\mathcal{O}(np)\) variables.
Our algorithm is based on the observation that the conversion algorithm from PDAs to CFGs need not consider all the triples in \eqref{eq:triples}.
We discard unnecessary triples using the \emph{saturation procedure}~\cite{Bouajjani1997,Finkel1997} that computes the set of reachable IDs. 

For a given PDA \(P\) with \(q \in Q\) and \(X \in \Gamma\), define the set of \emph{reachable IDs} \(R_P(q,X)\) as follows:
\[R_P(q,X) = \{(q',\beta) \mid \exists (q,X) \vdash \cdots \vdash (q', \beta)\}\enspace .\]
\begin{lemma}
\label{lemma:udpda}
If \(P\) is a UDPDA then the set \(\{ I \in R_P(q,X) \mid \stack(I)=\varepsilon \}\) has at most one element for every state \(q\) and stack symbol \(X\).
\end{lemma}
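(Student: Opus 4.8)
The plan is to exploit determinism to show that the forward dynamics of \(P\) starting from the ID \((q,X)\) forms a single, unambiguous trajectory, and then to argue that an empty-stack ID can only ever occur as the very last configuration of such a trajectory. Concretely, I would first recall that in a deterministic PDA at most one action is enabled at any ID: the determinism condition forbids two distinct actions — whether input-consuming or \(\varepsilon\)-moves — from being simultaneously enabled at the same pair \((\state(I),(\stack(I))_1)\). Consequently every ID has at most one successor, so there is a unique (finite or infinite) maximal move sequence \((q,X)\vdash I_1 \vdash I_2 \vdash \cdots\) emanating from \((q,X)\). Since every move sequence starting at \((q,X)\) is forced step by step, each such sequence is a prefix of this maximal one, and hence \(R_P(q,X)\) is exactly the set of IDs occurring in this single sequence.

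Second, I would observe that an ID \(I\) with \(\stack(I)=\varepsilon\) is necessarily \emph{halting}. By the definition of enabledness, an action \((q',X')\hookrightarrow_b(\ldots)\) can be enabled at \(I\) only if \((\stack(I))_1 = X'\) for some \(X'\in\Gamma\); but when the stack is empty \((\stack(I))_1 = \varepsilon \notin \Gamma\), so no action is enabled and \(I\) has no successor. Thus, within the unique trajectory described above, any empty-stack ID must be its final element \(I_n\): it cannot occur in the interior, because every interior ID is followed by a move.

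Finally, since a sequence possesses at most one final element, at most one empty-stack ID can appear in the trajectory, and therefore in \(R_P(q,X)\). This yields the claim. (Note that if the trajectory is infinite it contains no halting ID at all, so the set is empty, which is consistent with the bound.)

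I expect the only delicate point to be the first step: one must verify that the deterministic condition really rules out the coexistence of an enabled \(\varepsilon\)-move and an enabled \(b\)-move at the same ID, and not merely two moves on the same input symbol, for otherwise the trajectory could branch and single-successor uniqueness would fail. Once the single-successor property is secured, the remainder is immediate; in particular, neither the unary restriction nor the actree/quasi-run correspondence of Theorem~\ref{thm:corespondence} is needed, as \emph{determinism alone} drives the argument.
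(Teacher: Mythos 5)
Your overall architecture---a unique maximal trajectory out of \((q,X)\), plus the observation that an empty-stack ID is halting and hence can only be the last element of that trajectory---is sound and arguably cleaner than the paper's argument. But your justification of the key step contains a genuine error, and your closing claim that ``neither the unary restriction \ldots{} is needed'' is false. The determinism conditions (the ones the paper states at the start of its proof) say only that \(\len{\delta(q,b,X)}\leq 1\) for each \emph{fixed} \(b\in\Sigma\cup\{\varepsilon\}\), and that an enabled \(\varepsilon\)-action excludes enabled input-consuming actions. They do \emph{not} forbid \(\delta(q,a,X)\) and \(\delta(q,b,X)\) from both being nonempty for two \emph{distinct} input symbols \(a\neq b\): branching on distinct input symbols is exactly what a deterministic automaton is allowed to do. So for a non-unary DPDA an ID can have several successors, your ``single, unambiguous trajectory'' need not exist, and indeed the lemma itself fails without the unary hypothesis: take the two actions \((q,X)\hookrightarrow_a(q_1,\varepsilon)\) and \((q,X)\hookrightarrow_b(q_2,\varepsilon)\) with \(a\neq b\); this PDA is deterministic, yet \(R_P(q,X)\) contains the two distinct empty-stack IDs \((q_1,\varepsilon)\) and \((q_2,\varepsilon)\). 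You flagged the \(\varepsilon\)-versus-\(b\) conflict as the delicate point, but the case you actually missed is \(a\)-versus-\(b\) for distinct input symbols, which no determinism condition rules out.

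The repair is immediate, and the repaired proof is a legitimate alternative to the paper's: invoke unarity precisely where you establish single-successorship. With \(\Sigma=\{a\}\), the only actions that can be enabled at an ID are the at most one \(\varepsilon\)-action and the at most one \(a\)-action popping the top symbol, and condition (ii) of determinism says these cannot coexist; hence every ID has at most one successor, and the rest of your argument (empty stack enables nothing, so an empty-stack ID is the unique endpoint of the unique trajectory) goes through. For comparison, the paper argues by contradiction: it takes two empty-stack IDs in \(R_P(q,X)\) with different states, extracts a divergence point \(J\) with \(J\vdash_a J_1\) and \(J\vdash_b J_2\), \(J_1\neq J_2\), and runs a three-way case analysis---\(a=b\) contradicts determinism, \(a\neq b\) with both being symbols contradicts unarity, and one of them being \(\varepsilon\) contradicts condition (ii). Both proofs hinge on the same three cases; the one you omitted is exactly the one discharged by the unary assumption, which is why it cannot be dropped.
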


\begin{proof}
Let \(P\) be a UDPDA with \(\Sigma = \{a\}\). Since \(P\) is \emph{deterministic} we have that \((i)\) for every \(q\in Q, X\in\Gamma\) and \(b\in\Sigma\cup\{\varepsilon\}\), \(\len{\delta(q,b,X)}\leq 1\) and, \((ii)\) for every \(q\in Q\) and \(X\in\Gamma\), if \(\delta(q,\varepsilon,X) \neq \emptyset\) then \(\delta(q,b,X) = \emptyset\) for every \(b\in \Sigma\).

The proof goes by contradiction. Assume that for some state \(q\) and stack symbol \(X\), there are two IDs \(I_1\) and \(I_2\) in \(R_P(q,X)\) such that 
\(\stack(I_1)=\stack(I_2)=\varepsilon\) and \(\state(I_1)\neq\state(I_2)\).

Necessarily, there exists three IDs \(J\), \(J_1\) and \(J_2\) with \(J_1 \neq J_2\) such that the following holds: 
\begin{align*}
(q,X) \vdash \cdots \vdash &J \vdash_{a} J_1 \vdash \cdots \vdash I_1\\
(q,X) \vdash \cdots \vdash &J \vdash_{b} J_2 \vdash \cdots \vdash I_2 \enspace .
\end{align*}
It is routine to check that if \(a=b\) then \(P\) is not deterministic, a contradiction. 
Next, we consider the case \(a\neq b\). 
When \(a\) and \(b\) are symbols, because \(P\) is a unary DPDA, then they are the same, a contradiction.
Else if either \(a\) or \(b\) is \(\varepsilon\) then \(P\) is not deterministic, a contradiction. 
We conclude from the previous that when \(\stack(I_1)=\stack(I_2)=\varepsilon\), then necessarily \(\state(I_1)=\state(I_2)\) and therefore 
that the set \(\{ I \in R_P(q,X) \mid \stack(I)=\varepsilon \}\) has at most one element.
\qed
\end{proof}

Intuitively, Lemma~\ref{lemma:udpda} shows that, when
fixing \(q\) and \(X\), there is at most one \(q'\) such that the triple
\([qXq']\) generates a string of terminals.  We use this fact to prove the following theorem.
\begin{theorem}
\label{thm:empty-udpda}
For every UDPDA with \(n\) states and \(p\) stack symbols, there is a polynomial time algorithm that computes an equivalent CFG with at most \(np\) variables.
\end{theorem}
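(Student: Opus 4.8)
The plan is to run the classical PDA-to-CFG construction of \eqref{eq:triples} and \eqref{eq:rulesoftriples}, but to retain only those triples that can actually derive a terminal string, using Lemma~\ref{lemma:udpda} to bound their number. Recall from the correctness of that construction \cite{AUTTH} that a triple \([qXq']\) derives some word \(w\) if{}f \(P\) admits a move sequence \((q,X) \vdash \cdots \vdash (q',\varepsilon)\) consuming \(w\); equivalently, \([qXq']\) derives at least one terminal string exactly when \((q',\varepsilon) \in R_P(q,X)\). Call such a triple \emph{useful}. By Lemma~\ref{lemma:udpda}, for each fixed pair \((q,X)\) the set \(\{\state(I) \mid I \in R_P(q,X),\ \stack(I) = \varepsilon\}\) has at most one element; hence for each \((q,X)\) there is at most one state \(q'\) for which \([qXq']\) is useful. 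Writing this state as \(\tau(q,X)\) when it exists, the useful triples are contained in \(\{[q\,X\,\tau(q,X)] \mid q\in Q,\ X\in\Gamma\}\), which has at most \(n\cdot p = np\) elements.

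First I would compute, for every \((q,X)\), whether \(\tau(q,X)\) is defined and its value. This is precisely a pop-reachability query on \(P\), answered in polynomial time by the saturation procedure \cite{Bouajjani1997,Finkel1997}, which produces a finite representation of the reachable IDs \(R_P(q,X)\). Then I would build the grammar \(G'\) whose variables are exactly the useful triples, whose rules are those of \eqref{eq:rulesoftriples} all of whose triples (on both sides) are useful, and whose start variable is the useful triple \([q_0\,Z_0\,\tau(q_0,Z_0)]\); if \(\tau(q_0,Z_0)\) is undefined then \(L(P) = \emptyset\) and the construction is trivial. Taking this triple itself as the start variable — rather than adjoining a fresh \(S\) with the rules \(S \rightarrow [q_0 Z_0 q]\) — is what keeps the count at \(np\) rather than \(np+1\), since by usefulness only \(q = \tau(q_0,Z_0)\) could ever contribute.

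For correctness I would argue \(L(G') = L(P)\). Every word derivable in \(G'\) is derivable in the full grammar \(G\) of \eqref{eq:triples}–\eqref{eq:rulesoftriples}, so \(L(G') \subseteq L(G) = L(P)\). For the reverse inclusion, consider any terminal derivation in \(G\) beginning with \(S \rightarrow [q_0 Z_0 q]\). Reading it top-down, each variable that is eventually rewritten into a terminal string must be useful, and a rule of \eqref{eq:rulesoftriples} can participate in such a derivation only if every triple on its right-hand side is itself eventually rewritten into terminals, hence useful as well. Therefore the entire derivation uses only useful triples and useful rules, and its first step can only be \(S \rightarrow [q_0 Z_0 \tau(q_0,Z_0)]\); all of these survive in \(G'\), so the same word is derivable there.

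The main obstacle is this correctness step: one must verify carefully that discarding the non-useful triples removes no complete derivation, which hinges on the fact that a triple occurring on the right-hand side of a rule used in a terminal derivation is forced to derive terminals and is therefore useful. Once this is settled, the variable count and the polynomial running time follow directly from Lemma~\ref{lemma:udpda} and the standard complexity of the saturation procedure.
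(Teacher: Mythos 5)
Your proposal is correct and takes essentially the same route as the paper: both use Lemma~\ref{lemma:udpda} to conclude that for each pair \((q,X)\) at most one triple \([qXq']\) can generate a terminal string, bound the surviving variables by \(np\), compute them in polynomial time via the saturation procedure of~\cite{Bouajjani1997,Finkel1997}, and observe that each action then yields only the rules whose right-hand-side triples are all generating. Your explicit handling of the start variable (reusing \([q_0 Z_0 \tau(q_0,Z_0)]\) instead of adjoining a fresh \(S\), which is needed to get exactly \(np\) rather than \(np+1\)) and your detailed correctness argument for discarding non-generating triples are points the paper leaves implicit, but they do not constitute a different approach.
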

\begin{proof}
The conversion algorithm translating a PDA \(P\) to a CFG \(G\) computes the set of grammar variables \( \{ [q X q'] \mid q, q'\in Q, X\in\Gamma\}\).
By Lemma~\ref{lemma:udpda}, for each \(q\) and \(X\) there is at most one variable \([qXq']\) in the previous set generating a string of terminals.
The consequence of the lemma is twofold:
\begin{inparaenum}[\upshape(\itshape i\upshape)]
\item For the triples it suffices to compute the subset \(T\) of the aforementioned generating variables. Clearly, \(\len{T}\leq np\).
\item Each action of \(P\) now yields a single rule in \(G\).
	This is because in \eqref{eq:rulesoftriples} there is at most one choice for
	\(r_1\) to \(r_d\), hence we avoid the exponential blowup of the runtime in the conversion algorithm.
\end{inparaenum}
To compute \(T\) given \(P\), we use the polynomial time saturation procedure~\cite{Bouajjani1997,Finkel1997} which given \( (q,X) \) computes a FSA for the set \(R_P(q,X)\). 
Then we compute from this set the unique state \(q'\) (if any) such that \( (q',\varepsilon) \in R_P(q,X)\), hence \(T\).
From the above we find that, given \(P\), we compute \(G\) in polynomial time.
\qed
\end{proof}

Up to this point, we have assumed the empty stack as the acceptance condition. 
For general PDA, assuming final states or empty stack as acceptance condition induces no loss of generality.
The situation is different for deterministic PDA where accepting by final states is more general than empty stack. 
For this reason, we contemplate the case where the UDPDA accepts by final states.
Theorem~\ref{thm:final-udpda} shows how our previous construction can be modified to accommodate the acceptance condition by final states.
\begin{theorem}
\label{thm:final-udpda}
For every UDPDA with \(n\) states and \(p\) stack symbols that accepts by final states, there is a polynomial time algorithm that computes an equivalent CFG with \(\mathcal{O}(np)\) variables.
\end{theorem}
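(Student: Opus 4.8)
The plan is to reduce the final-state case to the empty-stack case already solved in Theorem~\ref{thm:empty-udpda}, using the classical construction~\cite{AUTTH} that converts acceptance by final states into acceptance by empty stack, and then to recover the counting bounds despite the fact that this construction breaks determinism.

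First I would build, from a UDPDA $P=(Q,\Sigma,\Gamma,\delta,q_0,Z_0,F)$ with $\len{Q}=n$, $\len{\Gamma}=p$ and final states $F\subseteq Q$, a PDA $P'$ that accepts $L(P)$ by empty stack. I add a fresh bottom-of-stack symbol $X_0$, a fresh initial state $p_0$ with the single action $(p_0,X_0)\hookrightarrow_\varepsilon(q_0,Z_0\,X_0)$, and a fresh erasing state $p_e$ together with $(q,Y)\hookrightarrow_\varepsilon(p_e,\varepsilon)$ for every $q\in F$ and $Y\in\Gamma\cup\{X_0\}$, and $(p_e,Y)\hookrightarrow_\varepsilon(p_e,\varepsilon)$ for every $Y\in\Gamma\cup\{X_0\}$. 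Since the original actions never inspect $X_0$, this symbol can be removed only in state $p_e$, which in turn is reachable only after visiting a final state; hence $P'$ empties its stack on exactly the inputs accepted by $P$. Note that $P'$ has $n+2$ states and $p+1$ stack symbols.

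The main obstacle is that $P'$ is no longer deterministic: at a final state both an original action and the $\varepsilon$-branch to $p_e$ may be enabled, so Lemma~\ref{lemma:udpda} does not apply verbatim. I would instead prove the weaker statement that, for every $q$ and $X$, the set $\{I\in R_{P'}(q,X)\mid \stack(I)=\varepsilon\}$ contains at most two distinct states. The argument is that from $(q,X)$ the original part of $P'$ still runs deterministically, so a computation that pops $X$ without ever branching ends in the unique state furnished by Lemma~\ref{lemma:udpda}, call it $\hat q$; the only other possibility is to branch to $p_e$ at some final-state visit, after which the computation is forced to erase everything and to end in $p_e$. Hence every reachable empty-stack state lies in $\{\hat q,p_e\}$, and the set $T$ of generating triples has size at most $2(n+2)(p+1)=\mathcal{O}(np)$, giving the claimed bound on the number of variables.

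It remains to keep the number of rules, and thus the running time, polynomial; this is the delicate point, because a priori each of the intermediate states $r_1,\dots,r_d$ in \eqref{eq:rulesoftriples} now has two candidates, which threatens a $2^{d}$ blow-up. The observation that rules this out is monotonicity: once an intermediate state equals $p_e$, every later one must equal $p_e$ as well (from $p_e$ an empty stack is reached only in $p_e$), while as long as the states are original they are forced by determinism. Consequently, for a fixed action pushing $\beta$ with $\len{\beta}=d$, a useful rule is pinned down by the single position at which the chain of triples switches to $p_e$, so there are at most $d+1$ rules per action. I would finally compute $T$ exactly as in the proof of Theorem~\ref{thm:empty-udpda}, running the saturation procedure on $P'$ to obtain each $R_{P'}(q,X)$ and extracting the at most two empty-stack states; together with the $\mathcal{O}(d)$ bound on rules per action this produces, in polynomial time, an equivalent CFG with $\mathcal{O}(np)$ variables.
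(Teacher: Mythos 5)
Your proposal is correct and takes essentially the same approach as the paper's proof: convert final-state acceptance to empty-stack acceptance via a fresh bottom-of-stack symbol and an erasing sink state, establish the weakened form of Lemma~\ref{lemma:udpda} (at most two empty-stack-reaching states per pair \((q,X)\), one of which is the sink), and exploit the fact that the sink state is absorbing to bound the useful rules per action by \(\mathcal{O}(d)\), with triples computed by saturation as in Theorem~\ref{thm:empty-udpda}. The only differences are cosmetic: your erasing transitions pop a symbol while entering \(p_e\) rather than first switching state and then popping, and you count \(d+1\) rather than \(d\) rules per action, neither of which affects the argument.
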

\begin{proof}
Let \(P\) be a UDPDA with \(n\) states and \(p\) stack symbols that accepts by final states.
We first translate \(P=(Q,\Sigma,\Gamma,\delta,q_0,Z_0,F)\)\footnote{The set of final states is given by \(F\subseteq Q\).} into a (possibly nondeterministic) unary pushdown automaton \(P'=(Q',\Sigma,\Gamma',\delta',q'_0,Z'_0)\) with an empty stack acceptance condition.
In particular, \(Q'=Q\cup\{q'_0,\mathit{sink}\}\); \(\Gamma' = \Gamma\cup{Z'_0}\); and
\(\delta'\) is given by
\begin{align*}
\delta 
& \cup \{ (q'_0,Z'_0) \hookrightarrow_{\varepsilon} (q_0, Z_0\, Z'_0) \}\\
& \cup \{ (q,X) \hookrightarrow_{\varepsilon} (\mathit{sink}, X) \mid X \in \Gamma', q\in F\} \\
& \cup \{ (\mathit{sink},X) \hookrightarrow_{\varepsilon} (\mathit{sink},\varepsilon) \mid X\in\Gamma' \} \enspace .
\end{align*}
The new stack symbol \(Z'_0\) is to prevent \(P'\) from incorrectly accepting when \(P\) is in a nonfinal state with an empty stack.
The state \(\mathit{sink}\) is to empty the stack upon \(P\) entering a final state.
Observe that \(P'\) need not be deterministic.
Also, it is routine to check that \(L(P')=L(P)\) and \(P'\) is computable in time linear in the size of \(P\).
Now let us turn to \(R_{P'}(q,X)\).
For \(P'\) a weaker version of Lemma~\ref{lemma:udpda} holds:
the set \(H=\{I\in R_{P'}(q,X) \mid \stack(I)=\varepsilon\}\) has at most two elements for every state \(q\in Q'\) and stack symbols \(X\in\Gamma'\).
This is because if \(H\) contains two IDs then necessarily one of them has \(\mathit{sink}\) for state. 

Based on this result, we construct \(T\) as in Theorem~\ref{thm:empty-udpda}, but this time we have that \(\len{T}\) is \(\mathcal{O}(np)\). 

Now we turn to the set of production rules as defined in \eqref{eq:rulesoftriples} (see Section~\ref{sec:prelim}). 
We show that each action \((q,X) \hookrightarrow_b (q',\beta)\) of \(P'\) yields at most \(d\) production rules in \(G\) where \(d=\len{\beta}\).
For each state \(r_i\) in \eqref{eq:rulesoftriples} we have two choices, one of which is \(\mathit{sink}\). 
We also know that once a move sequence enters \(\mathit{sink}\) it cannot leave it.
Therefore, we have that if \(r_i = \mathit{sink}\) then \(r_{i+1}=\cdots=r_d=\mathit{sink}\).
Given an action, it thus yields \(d\) production rules one where \(r_1=\cdots=r_d=\mathit{sink}\), another where \(r_2=\cdots=r_d=\mathit{sink}\), \dots, etc.
Hence, we avoid the exponential blowup of the runtime in the conversion algorithm.

The remainder of the proof follows that of Theorem~\ref{thm:empty-udpda}.
\qed
\end{proof}

\section{Parikh-Equivalent Finite State Automata}\label{sec:FSA}

Parikh's theorem~\cite{Parikh66} shows that every context-free language is Parikh-equivalent to a regular language.
Using this result, we can compare PDAs against FSAs under Parikh equivalence.
We start by deriving some lower bound using the family \(P(n,k)\).
Because its alphabet is unary and it accepts a single long word, the comparison becomes straightforward.

\begin{theorem}
\label{thm:lowerbound}
For each \(n \geq 1\) and \(p > 2n + 4\), there is a PDA with \(n\) states and \(p\) stack symbols for which every Parikh-equivalent FSA  has at least \(2^{n^2(p - 2n - 4)} + 1\) states.
\end{theorem}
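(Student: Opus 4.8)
The plan is to reduce the statement to the combination of Lemma~\ref{lemma:unique_w} and Parikh's theorem, exactly as the preceding FSA discussion sets up. Fix \(n\geq 1\) and \(p>2n+4\), set \(k=p-2n-4\geq 1\), and take the member \(P=P(n,k)\) of the family, which by construction has \(n\) states and \(p=k+2n+4\) stack symbols. By Lemma~\ref{lemma:unique_w}, \(P\) has a single accepting actree; since actrees and quasi-runs (hence accepting actrees and runs) are in one-to-one correspondence by Theorem~\ref{thm:corespondence}, this means \(L(P)=\{b^{N}\}\) for a single word with \(N\geq 2^{n^2 k}=2^{n^2(p-2n-4)}\).

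The key observation, already flagged in the proof of Theorem~\ref{thm:parikh_cfg}, is that over a unary alphabet Parikh equivalence collapses to language equivalence: a language \(L\subseteq\{b\}^{*}\) satisfies \(\lbag L\rbag=\lbag L(P)\rbag\) if{}f \(L=L(P)\). So any Parikh-equivalent FSA \(A\) must accept exactly the single word \(b^{N}\), i.e.\ \(L(A)=\{b^{N}\}\). I would then argue that recognizing a single word \(b^{N}\) over a unary alphabet forces at least \(N+1\) states. The clean way is to consider an accepting path in \(A\) for \(b^{N}\): reading \(b^{N}\) (ignoring \(\varepsilon\)-transitions, or after the standard \(\varepsilon\)-removal that does not increase the state count) visits a sequence of states \(q_0,q_1,\ldots,q_N\) with \(q_N\in F\). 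If two of these states coincided, say after reading prefixes of lengths \(i<j\), one could pump the intermediate loop to accept \(b^{N+(j-i)}\neq b^{N}\), contradicting \(L(A)=\{b^{N}\}\); hence the \(N+1\) visited states are pairwise distinct and \(A\) has at least \(N+1\geq 2^{n^2(p-2n-4)}+1\) states.

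The main obstacle is making the ``\(N+1\) distinct states'' pumping argument airtight in the presence of \(\varepsilon\)-transitions, since the FSA model in the preliminaries allows \(\delta\subseteq Q\times(\Sigma\cup\{\varepsilon\})\times Q\). I would handle this by first noting that \(\varepsilon\)-transitions can be eliminated without increasing the number of states (standard \(\varepsilon\)-closure construction keeps the same state set), so it suffices to prove the bound for \(\varepsilon\)-free FSAs; then the accepting run on \(b^{N}\) has exactly \(N+1\) configurations and the pumping argument applies directly. An alternative that avoids \(\varepsilon\)-elimination entirely is to apply the pumping lemma for regular languages: if \(A\) had at most \(N\) states then the pumping length would be at most \(N\le N\), and pumping the accepted word \(b^{N}\) would yield another accepted word of different length, again contradicting \(L(A)=\{b^{N}\}\); this shows \(A\) needs strictly more than \(N\) states, i.e.\ at least \(N+1\). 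Either route closes the argument, and chaining the inequality \(N\geq 2^{n^2(p-2n-4)}\) gives the claimed lower bound of \(2^{n^2(p-2n-4)}+1\) states.
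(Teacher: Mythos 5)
Your proposal is correct and follows essentially the same route as the paper: instantiate \(P(n,k)\) with \(k=p-2n-4\), invoke Lemma~\ref{lemma:unique_w} to get \(L(P)=\{b^N\}\) with \(N\geq 2^{n^2k}\), note that Parikh equivalence coincides with language equivalence over a unary alphabet, and conclude that any FSA accepting exactly \(\{b^N\}\) needs \(N+1\) states. The only difference is that the paper simply asserts this last counting fact, whereas you supply the pumping argument (including the \(\varepsilon\)-transition caveat) that justifies it.
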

\begin{proof}
Consider the family of PDAs \(P(n,k)\) with \(n \geq 1\) and \(k \geq 1\) described in Definition~\ref{def:P}.
Fix \(n\) and \(k\) and refer to the corresponding member of the family as \(P\).
By Lemma~\ref{lemma:unique_w}, \(L(P) = \{b^N\}\) with \(N\geq 2 ^{n^2k}\). Then, the smallest FSA that is Parikh-equivalent to \(L(P)\) needs \(N+1\) states.
As \(k = p-2n-4\), we conclude that the smallest Parikh-equivalent FSA has at least \(2^{n^2(p - 2n - 4)} + 1\) states.
\qed
\end{proof}

Let us now turn to upper bounds. 
We give a 2-step procedure computing, given a PDA, a Parikh-equivalent FSA.
The steps are:
\begin{inparaenum}[\upshape(\itshape i\upshape)]
\item translate the PDA into a language-equivalent context-free grammar~\cite{AUTTH}; and
	\item translate the context-free grammar into a Parikh-equivalent finite state automaton~\cite{IPL}. 
\end{inparaenum}
Let us introduce the following definition.
A grammar is in \emph{2-1 normal form} (2-1-NF for sort) if each rule \( (X,\alpha) \in R\) is such that \(\alpha\) consists of at most one terminal and at most two variables. 
It is worth pointing that, when the grammar is in 2-1-NF, the resulting Parikh-equivalent FSA from step {\upshape(\itshape ii\upshape)} has \(\mathcal{O}(4^n)\) states where \(n\) is the number of grammar variables \cite{IPL}.
For the sake of simplicity, we will assume that grammars are in 2-1-NF which holds when PDAs are in \emph{reduced form}: every move is of the form \((q,X)\hookrightarrow_b (q',\beta)\) with \(\len{\beta}\leq 2\) and \(b \in \Sigma\cup\{\varepsilon\}\).

\begin{theorem}\label{thm:upperfsa}
Given a PDA in reduced form with \(n\geq 1\) states and \(p \geq 1\) stack symbols, there is a \mbox{Parikh-equivalent} FSA with \(\mathcal{O}(4^{n^2p})\) states.
\end{theorem}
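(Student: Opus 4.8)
The plan is to show that the two-step procedure introduced just above actually yields the claimed automaton, so the proof amounts to tracking the parameters through both translations. First I would apply step {\upshape(\itshape i\upshape)}, the textbook PDA-to-CFG conversion recalled in Section~\ref{sec:prelim}, to the given PDA $P$. This produces a CFG $G$ with $L(G)=L(P)$ whose variable set is exactly the set of triples together with $S$, namely $\{[qXq']\mid q,q'\in Q,\ X\in\Gamma\}\cup\{S\}$ as in \eqref{eq:triples}. Since $P$ has $n$ states and $p$ stack symbols, this set has $n^2p+1$ elements, so $G$ has $\mathcal{O}(n^2p)$ variables.

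Next I would verify that $G$ is in 2-1-NF, and this is the only place the reduced-form hypothesis on $P$ is used. Inspecting the production rules \eqref{eq:rulesoftriples}, an action $(q,X)\hookrightarrow_b(q',\beta)$ with $d=\len{\beta}$ yields rules of the form $[qXr_{d}]\rightarrow b[q'(\beta)_1 r_1]\cdots[r_{d-1}(\beta)_{d}r_{d}]$, whose right-hand side carries a single terminal $b$ and exactly $d$ variables; the $S$-rules $S\rightarrow[q_0Z_0q]$ carry one variable and no terminal. Because $P$ is in reduced form we have $d=\len{\beta}\leq 2$ for every action, so every rule of $G$ has at most one terminal and at most two variables. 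Hence $G$ is in 2-1-NF, exactly the hypothesis needed to invoke step {\upshape(\itshape ii\upshape)}.

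Then I would apply step {\upshape(\itshape ii\upshape)}, the CFG-to-FSA translation of~\cite{IPL}, to $G$. For a grammar in 2-1-NF with $m$ variables this builds a finite state automaton $A$ such that $L(A)$ is Parikh-equivalent to $L(G)$ and $A$ has $\mathcal{O}(4^{m})$ states. Substituting $m=n^2p+1$ gives $\mathcal{O}(4^{n^2p+1})=\mathcal{O}(4^{n^2p})$ states. Finally, since $L(G)=L(P)$ forces $\lbag L(G)\rbag=\lbag L(P)\rbag$ and Parikh-equivalence is transitive, $A$ is Parikh-equivalent to $P$, which closes the argument.

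Because each of the two steps is a black-box application of a previously established construction, I do not expect a genuinely hard step here; the proof is essentially careful bookkeeping. The one point that requires real attention is the 2-1-NF verification: one must confirm that the reduced-form bound $\len{\beta}\leq 2$ genuinely caps the number of variables per rule at two, and that no rule — in particular the $S$-rules — escapes the normal form. Everything else reduces to the variable count $n^2p+1$ and the elementary absorption $4^{n^2p+1}\in\mathcal{O}(4^{n^2p})$.
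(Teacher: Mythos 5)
Your proof is correct and follows essentially the same route as the paper: apply the textbook PDA-to-CFG conversion to get a language-equivalent grammar with $n^2p+1$ variables, observe that the reduced form of the PDA makes this grammar 2-1-NF, and then invoke the construction of~\cite{IPL} to obtain a Parikh-equivalent FSA with $\mathcal{O}(4^{n^2p+1})=\mathcal{O}(4^{n^2p})$ states. Your explicit check that every rule of \eqref{eq:rulesoftriples} (including the $S$-rules) satisfies 2-1-NF is in fact more detailed than the paper's proof, which merely asserts this fact.
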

\begin{proof} 
The algorithm to convert a PDA with \(n\geq1\) states and \(p\geq1\) stack symbols into a CFG that generates the same language~\cite{AUTTH} uses at most \(n^2p+1\) variables if \(n> 1\) (or \(p\) if \(n=1\)).
Given a CFG of \(n\) variables in 2-1-NF, one can construct a Parikh-equivalent FSA with \(\mathcal{O}(4^n)\) states~\cite{IPL}.

Given a PDA \(P\) with \(n\geq1\) states and \(p\geq1\) stack symbols the conversion algorithm returns a language-equivalent CFG \(G\). 
Note that if \(P\) is in reduced form, then the conversion algorithm returns a CFG in 2-1-NF.
Then, apply to \(G\) the known construction that builds a Parikh-equivalent FSA~\cite{IPL}.
The resulting FSA has \(\mathcal{O}(4^{n^2p})\) states.
\qed
\end{proof}

\begin{remark}
\label{rem:FSA}
Theorem \ref{thm:lowerbound} shows that a every FSA that is Parikh-equivalent to \(P(n,k)\) needs \(\Omega({2^{n^2k}})\) states. On the other hand, Theorem \ref{thm:upperfsa} shows that the number of states of every Parikh-equivalent FSA is \(O(4^{n^2(k + 2n + 4)})\). Thus, our construction is \emph{close to optimal}\footnote{As the blow up of our construction is \(O(4^{n^2(k + 2n + 4)})\) for a lower bound of \(2^{n^2k}\), we say that it is \emph{close to optimal} in the sense that \(2n^2(k + 2n + 4) \in \Theta(n^2k)\), which holds when \(n\) is in linear relation with respect to \(k\) (see Remark \ref{rem:CFG}).} when \(n\) is in linear relation with respect to \(k\).
\end{remark}

We conclude by discussing the reduced form assumption.
Its role is to simplify the exposition and, indeed, it is not needed to prove correctness of the 2-step procedure. 
The assumption can be relaxed and bounds can be inferred.  
They will contain an additional parameter related to the length of the longest sequence of symbols pushed on the stack.

\subsubsection{\ackname}
We thank Pedro Valero for pointing out the reference on smallest grammar problems \cite{Charikar2005}.
We also thank the anonymous referees for their insightful comments and suggestions.

\bibliographystyle{abbrv}
\newpage
\appendix 
\section{Appendix}

\subsection{Proof of Lemma~\ref{lemma:length}}
\begin{proof}
By induction on \(\len{t}\).

\noindent
\textbf{Base case.} Since \(\len{t}=1\) necessarily \(t = a\) and \(d(t) = 0\). Hence \(1 \geq 2^0\).

\noindent 
\textbf{Inductive case.}
Let \(t = {a}(a_1(\ldots),\ldots, a_r(\dots))\) with \(r\geq 1\). We study two cases.
Suppose there is a unique subtree \(t_x=a_x(\ldots)\) of \(t\) with \(x \in \{1, \ldots, r\}\) such that \( d(t_x) = d(t)\). 
As \(\len{t_x}<\len{t}\), the induction hypothesis shows that \(\len{t_x} \geq 2^{d(t_x)}=2^{d(t)}\), hence \(\len{t} \geq 2^{d(t)}\).

Next, let \(r\geq 2\) and suppose there are at least two subtrees \(t_x=a_x(\ldots)\) and \(t_y=a_y(\ldots)\) of \(t\) with \(x, y \in \{1, \ldots, r\}\) and \(x \neq y\) such that \(d(t_x) = d(t_y) = d(t)-1 \). 
As \(\len{t_x}<\len{t}\), the induction hypothesis shows that \(\len{t_x}\geq 2^{d(t_x)}\).
Applying the same reasoning to \(t_y\) we conclude from \(\len{t} \geq \len{t_x}+\len{t_y}\) that \(\len{t} \geq 2^{d(t_x)} + 2^{d(t_y)} = 2\cdot2^{d(t)-1} = 2^{d(t)}\).
\qed
\end{proof}

\subsection{Disassembly of Quasi-runs}
A quasi-run with more than one move can be \emph{disassembled} into its first move and subsequent quasi-runs.
To this end, we need to introduce a few auxiliary definitions.
Given a word \(w\in \Sigma^{*}\) and an integer \(i\), define \(w_{\mathit{sh}(i)} = (w)_{i+1}\cdots(w)_{i+\len{w}}\).
Intuitively, \(w\) is shifted \(i\) positions to the left if \(i\geq 0\) and to the right otherwise.
So given \(i\geq 0\), we will conveniently write \(w_{\ll_i}\) for \(w_{\mathit{sh}(i)}\) and \(w_{\gg_i}\) for \(w_{\mathit{sh}(-i)}\).
Moreover, set \(w_{\ll} = w_{\ll_1}\).
For example, \(a_{\ll_1} = a_{\gg_1} = \varepsilon\), \(abcde_{\ll_3} = de\), \(abcde_{\gg_3} = ab\), \(w = (w)_1 \cdots(w)_i\, w_{\ll_i}\) and \( w = w_{\gg_i}\, (w)_{\len{w}-i+1}\cdots (w)_{\len{w}}\) for \(i>0\).

Given an ID \(I\) and \(i>0\) define \(I_{\gg_i} = (\state(I),\tape(I),\stack(I)_{\gg_i})\) which, intuitively, removes from \(I\) the \(i\) bottom stack symbols.

\begin{lemma}[from~\cite{GV2016}]
\label{lemma:disassembly}
Let \(r = I_0 \vdash{\cdots}\vdash I_n\), be a quasi-run. Then we can \emph{disassemble} \(r\) into its first move \(I_0 \vdash I_1\) and \(d = \len{\stack(I_1)}\) quasi-runs \(r_1,\ldots,r_d\) each of which is such that
\[
r_i = (I_{p_{i-1}})_{\gg_{n_{i}}} \vdash{\cdots}\vdash (I_{p_i})_{\gg_{n_i}}\enspace .
\]
where \(p_0 \leq p_1 \leq \cdots \leq p_d\) are defined to be the least positions such that \(p_0=1\) and \(\stack(I_{p_i}) = \stack(I_{p_{i-1}})_{\ll}\) for all \(i\). 
Also \(n_i=\len{\stack(I_{p_i})}\) for all \(i\), that is \(r_i\) is
a quasi-run obtained by removing from the move sequence \(I_{p_{i-1}}\vdash{\cdots}\vdash I_{p_i}\) the \(n_i\) bottom stack symbols leaving the stack of \(I_{p_i}\) empty and that of \(I_{p_{i-1}}\) with one symbol only.
Necessarily, \(p_d = n\) and each quasi-run \(r_i\) starts with \((\stack(I_1))_i\) as its initial content.
\end{lemma}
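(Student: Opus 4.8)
The plan is to reduce everything to one structural property of pushdown moves, which I will call \emph{bottom preservation}: a move \(J \vdash J'\) inspects and rewrites only the topmost stack symbol, so if \(\stack(J)=X\gamma\) and the applied action is \((q,X)\hookrightarrow_b(q',\beta')\), then \(\stack(J')=\beta'\gamma\) and the suffix \(\gamma\) is left untouched. Two consequences will be used repeatedly: appending a fixed word at the bottom of every ID of a move sequence again yields a move sequence; and, conversely, if along a block of moves the stack height never drops below some value \(h\) except to expose exactly the bottom \(h\) symbols at the final ID, then deleting those symbols (the operation \(\cdot_{\gg_h}\)) again yields a valid move sequence. I would isolate this as a short auxiliary claim before the main argument.

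First I would fix the initial move \(I_0\vdash I_1\): since \(r\) is a quasi-run, \(\len{\stack(I_0)}=1\), so this move pops that single symbol and pushes \(\beta=\stack(I_1)\), giving \(d=\len{\beta}\) (if \(d=0\) then \(I_1\) already has empty stack, \(n=1\), and nothing remains to decompose). Each move decreases the height by at most one, the height equals \(d\) at \(I_1\) and \(0\) at \(I_n\), and an empty stack enables no move, so height \(0\) occurs only at \(I_n\); hence the height passes through every value \(d,d-1,\dots,0\), and I define \(p_i\) as the least index where it equals \(d-i\), so that \(p_0=1\) and \(p_d=n\). By induction on the moves I would then prove a \emph{suffix invariant}: while the height stays \(\geq d-i+1\), the word \((\beta)_{i+1}\cdots(\beta)_d\) remains a suffix of the stack, since it lies strictly below the top and is therefore preserved by each move. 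Consequently, when the height first drops to \(d-i\) the stack equals exactly \((\beta)_{i+1}\cdots(\beta)_d\), which yields both \(\stack(I_{p_i})=(\stack(I_{p_{i-1}}))_{\ll}\) and \(n_i=\len{\stack(I_{p_i})}=d-i\).

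Next I would assemble the quasi-runs. For each \(i\), minimality of \(p_i\) forces the height to remain \(\geq n_i+1\) throughout the open block \((p_{i-1},p_i)\) and to drop to \(n_i\) only at \(p_i\), where the final move is a net pop that removes the top symbol \((\beta)_i\) while leaving the bottom \(n_i\) symbols intact. Thus those bottom \(n_i\) symbols are constant across the whole block, so the bottom-preservation claim shows that \(\cdot_{\gg_{n_i}}\) sends \(I_{p_{i-1}}\vdash\cdots\vdash I_{p_i}\) to a valid move sequence \(r_i=(I_{p_{i-1}})_{\gg_{n_i}}\vdash\cdots\vdash(I_{p_i})_{\gg_{n_i}}\). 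Its first ID carries the single-symbol stack \((\beta)_i=(\stack(I_1))_i\) and its last ID has empty stack, so \(r_i\) is a quasi-run with initial content \((\stack(I_1))_i\), exactly as stated.

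I expect the main obstacle to be the suffix invariant, and specifically proving that the \emph{first} time the height reaches \(d-i\) the content is exactly the protected suffix rather than some other word of the same height, and that within each block the bottom \(n_i\) symbols are genuinely never exposed so that their removal is legitimate. Both rest on a careful height-indexed induction together with the boundary move where the height drops to \(n_i\), which must be shown to be a pure pop leaving the protected region untouched. Once this is in place, the ordering \(p_0<\cdots<p_d\), the identity \(p_d=n\), and the recorded initial stack contents all follow by routine bookkeeping.
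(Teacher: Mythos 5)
The paper never proves this lemma: it is imported verbatim from~\cite{GV2016} and only illustrated by an example (Example~3 in the Appendix), so there is no in-paper proof to compare yours against. Judged on its own, your argument is correct and supplies exactly what such a proof needs. The two ingredients you isolate are the right ones: (a) \emph{bottom preservation} --- a move rewrites only the top stack symbol, so appending a fixed word at the bottom of every ID of a move sequence, or deleting a bottom portion that is never exposed, again yields a valid move sequence --- and (b) the discrete intermediate-value argument on stack heights: each move decreases the height by at most one, and an empty stack enables no move, so along \(I_1,\ldots,I_n\) the height visits each of \(d, d-1, \ldots, 0\), the value \(0\) only at \(I_n\), which also gives \(p_d = n\). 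Your suffix invariant then correctly pins down the stack content at the first visit to height \(d-i\) as \((\stack(I_1))_{i+1}\cdots(\stack(I_1))_{d}\), which is precisely what makes the truncation \(\cdot_{\gg_{n_i}}\) legitimate on each block and yields that \(r_i\) is a quasi-run with initial content \((\stack(I_1))_i\).

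Two small points to tighten, both bookkeeping rather than gaps. First, your height-based definition of \(p_i\) must be restricted to indices \(\geq 1\) (equivalently, to indices \(> p_{i-1}\)): the ID \(I_0\) has height \(1\), so whenever \(d-i=1\) the literal ``least index with height \(d-i\)'' would be \(0\), not the intended position. Second, the lemma defines \(p_i\) by stack \emph{content}, namely \(\stack(I_{p_i}) = \stack(I_{p_{i-1}})_{\ll}\), not by height; you should add one line observing that the two definitions coincide: content equality forces height \(d-i\), and before your \(p_i\) all heights among indices \(\geq 1\) strictly exceed \(d-i\), so no earlier ID can carry a stack of that length, while your invariant shows that \(I_{p_i}\) carries exactly that content.
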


\begin{example}
\label{ex:disassembly}
Recall the PDA \(P\) described in Example~\ref{ex:tree}. Consider the quasi-run:
\begin{align*}
r &= (q_0, X_1) \vdash (q_0,  X_0\,X_0) \vdash (q_1,  X_1\,\star\,X_0) \vdash (q_1,  X_0\,X_0\,\star\,X_0) \vdash (q_1, X_0\,\star\,X_0) \vdash\\
&(q_1,  \star\,X_0) \vdash (q_0, X_0) \vdash (q_1,  X_1\,\star) \vdash (q_1,  X_0\,X_0\,\star)\vdash (q_1, X_0\,\star)\vdash (q_1, \star) \vdash (q_0, \varepsilon)\enspace .
\end{align*}

 We can dissasemble \(r\) into its first move \(I_0 \vdash I_1 = (q_0, X_1) \vdash (q_0, X_0\,X_0)\) and \(d=2\) quasi-runs \(r_1, r_2\) such that
\begin{align*}
r_1 &= (I_{p_0})_{\gg_{n_1}} \vdash^{*} (I_{p_1})_{\gg_{n_1}} & p_0=1, p_1=6, n_1 = \len{\stack(I_6)} = 1\\
    &= (I_{1})_{\gg_1} \vdash^{*} (I_{6})_{\gg_1}\\
    &= (q_0, X_0) \vdash (q_1, X_1\,\star)\vdash (q_1,  X_0\,X_0\,\star)\vdash\\
    &\qquad (q_1, X_0\,\star) \vdash (q_1, \star)\vdash (q_0, \varepsilon)\\
r_2 &= (I_{p_1})_{\gg_{n_2}} \vdash^{*} (I_{p_2})_{\gg_{n_2}} & p_1=6, p_2=11, n_2 = \len{\stack(I_{11})} = 0\\
    &= (I_{6})_{\gg_0} \vdash (I_{11})_{\gg_0} \\
    &= (q_0, X_0) \vdash (q_1, X_1\,\star) \vdash (q_1, X_0\,X_0\,\star)\vdash\\
    &\qquad  (q_1, X_0\,\star) \vdash (q_1, \star) \vdash (q_0, \varepsilon)\enspace .
\end{align*}
Note that for each  quasi-run \(r_i\,(i=1,2)\), the stack of \((I_{p_i})_{\gg_{n_i}}\) is empty and that of \((I_{p_{i-1}})_{\gg_{n_i}}\) contains one symbol only. Also, \(p_d = p_2  = n = 11\) and each \(r_i\) starts with \((\stack(I_1))_i\) as its initial content.
\end{example}

\subsection{Assembly of Quasi-runs}

Now we show how to \emph{assemble} a quasi-run from a given action and a list of quasi-runs. We need the following notation: given \(I\) and \(w\in\Gamma^*\), define \(I \bullet w = (\state(I), \stack(I)\, w) \).

\begin{lemma}
\label{lemma:assembly}
Let \( a=(q,X)\hookrightarrow(q',\beta_1\ldots \beta_d)\) be an action and \(r_1, \ldots, r_d\) be \(d \geq 0\) quasi-runs with \(r_i = I^i_0 \vdash I^i_1 \vdash{\cdots}\vdash I^i_{n_i}\) for all \(i\), such that
\begin{compactitem}
  \item the first action of \(r_i\) pops \( \beta_i \) for every \(i\);
  \item the target state of last action of \(r_i\) (\(a\) when \(i=0\) ) is the source state of first action of \(r_{i+1}\) for all \( i \in \{1,\ldots,d-1\}\).
\end{compactitem}
Then there exists a quasi-run \(r\) given by
\begin{align}
  (q,X) \vdash (q',\beta_1\ldots\beta_d) &\vdash (I^1_1 \bullet \beta_2\ldots \beta_d)\vdash \cdots \vdash (I^1_{n_1} \bullet \beta_2\ldots \beta_d)\vdash\cdots \notag\\
  &\vdash(I^{\ell}_1 \bullet \beta_{\ell+1}\ldots \beta_d)\vdash \cdots \vdash(I^{\ell}_{n_\ell} \bullet \beta_{\ell+1}\ldots \beta_d)\vdash \cdots \notag\\
  &\vdash (I^d_1  \bullet \varepsilon)\vdash \cdots \vdash(I^d_{n_d} \bullet \varepsilon)\enspace .
\end{align}
\end{lemma}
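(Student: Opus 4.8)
The plan is to exhibit $r$ explicitly as the concatenation of the single move of $a$ followed by the $d$ quasi-runs $r_1,\ldots,r_d$, each suitably padded at the \emph{bottom} of its stack, and then verify that consecutive pieces glue together exactly. The conceptual engine is a \emph{bottom-extension} property of moves: since a move $(q,Y\gamma)\vdash(q'',\beta'\gamma)$ inspects only the state and the top stack symbol and rewrites only the top of the stack, appending any fixed word $w\in\Gamma^*$ at the bottom preserves both enabledness and effect. Formally, if $J_0\vdash\cdots\vdash J_m$ is a move sequence and $w\in\Gamma^*$, then $(J_0\bullet w)\vdash\cdots\vdash(J_m\bullet w)$ is a move sequence consuming the same input; I would establish this by a one-line induction on $m$ directly from the definition of a move.

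First I would apply this property to each quasi-run $r_i=I^i_0\vdash\cdots\vdash I^i_{n_i}$ with the pad $w_i=\beta_{i+1}\cdots\beta_d$ (so that $w_d=\varepsilon$), obtaining the move sequence $(I^i_0\bullet w_i)\vdash\cdots\vdash(I^i_{n_i}\bullet w_i)$. These are exactly the $d$ horizontal blocks displayed in the statement. It then remains to check that the endpoints match, so that the blocks, preceded by the move $(q,X)\vdash(q',\beta_1\cdots\beta_d)$, form a single uninterrupted move sequence.

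The two gluing checks both rely on the hypotheses that the first action of $r_i$ pops $\beta_i$ (whence $\stack(I^i_0)=\beta_i$, a single symbol, since $\len{\stack(I^i_0)}=1$ for a quasi-run) and that the states chain up. At the head, after the move of $a$ we reach $(q',\beta_1\cdots\beta_d)$; since the source state of $r_1$'s first action is the target state of $a$, namely $q'$, and $\stack(I^1_0)=\beta_1$, this ID equals $(I^1_0\bullet w_1)$. At the junction between $r_i$ and $r_{i+1}$, the last ID of block $i$ is $(I^i_{n_i}\bullet w_i)=(\state(I^i_{n_i}),\beta_{i+1}\cdots\beta_d)$ because $\stack(I^i_{n_i})=\varepsilon$; the chaining hypothesis identifies $\state(I^i_{n_i})$ with the source state of $r_{i+1}$'s first action, i.e.\ $\state(I^{i+1}_0)$, and $\stack(I^{i+1}_0)=\beta_{i+1}$ gives $(I^{i+1}_0\bullet w_{i+1})=(\state(I^{i+1}_0),\beta_{i+1}\cdots\beta_d)$, which coincides with the last ID of block $i$. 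Hence consecutive blocks share their boundary ID exactly, and the concatenation is a valid move sequence equal to the displayed $r$.

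Finally I would confirm that $r$ is a \emph{quasi-run}: it begins at $(q,X)$ with $\len{\stack}=1$ and ends at $(I^d_{n_d}\bullet\varepsilon)=I^d_{n_d}$ with $\len{\stack}=0$, as required. The main obstacle is purely the bottom-extension property; once it is in place, the whole argument reduces to bookkeeping with the $\bullet$ operation, using the popping hypothesis to pin down each $\stack(I^i_0)=\beta_i$ and the chaining hypothesis to match states across junctions. The degenerate case $d=0$, where $a$ pushes $\varepsilon$ and $r$ is the single move $(q,X)\vdash(q',\varepsilon)$, is immediate.
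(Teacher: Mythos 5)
Your proof is correct. Note that the paper itself states Lemma~\ref{lemma:assembly} without any proof (it is treated as a routine companion to the disassembly lemma imported from~\cite{GV2016}), so there is no paper argument to compare against; your write-up supplies exactly the details the authors omitted. The two ingredients you isolate are the right ones and both hold: the bottom-extension property (a move only inspects the state and top stack symbol, so padding the stack bottom with a fixed \(w\in\Gamma^*\) preserves every move) applied to each \(r_i\) with pad \(w_i=\beta_{i+1}\cdots\beta_d\), and the boundary bookkeeping showing \((I^i_{n_i}\bullet w_i)=(I^{i+1}_0\bullet w_{i+1})\) --- which uses precisely that \(\stack(I^i_{n_i})=\varepsilon\) and \(\len{\stack(I^{i+1}_0)}=1\) for quasi-runs, the popping hypothesis \(\stack(I^{i+1}_0)=\beta_{i+1}\), and the state-chaining hypothesis. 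This also explains correctly why the displayed sequence in the lemma lists each block starting from \(I^i_1\) rather than \(I^i_0\): the ID \(I^i_0\bullet w_i\) is the shared boundary with the preceding block. Your handling of the head move, the final quasi-run condition, and the degenerate case \(d=0\) are all sound.
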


\subsection{Proof of Theorem~\ref{thm:corespondence}}

\begin{proof} To prove the existence of a one-to-one correspondence we show that:
  \begin{compactenum}
    \item Each quasi-run must be paired with at least one actree, and viceversa.
    \item No quasi-run may be paired with more than one actree, and viceversa.
  \end{compactenum}
  1. First, given a quasi-run \(r\) of \(P\), define a tree \(t\) inductively on the length of \(r\). We prove at the same time that \eqref{eq:1} holds for \(t\) which we show is an actree.
\begin{align}
\label{eq:1}
&\text{\(\seq{t}\) and the sequence of actions of \(r\) coincide}.
\end{align}
For the base case, necessarily \(r = I_0\vdash I_1\) and we define \(t\) as the leaf labeled by the action \( I_0 \hookrightarrow I_1 \). 
Clearly, \(t\) satisfies (\ref{eq:1}), hence \(t\) is an actree (\(t\) trivially verifies Definition~\ref{def:tree}).

Now consider the case \(r = I_0 \vdash I_1 \vdash{\cdots}\vdash I_n\) where \(n>1\), we define \(t\) as follows.
Lemma~\ref{lemma:disassembly} shows \(r\) disassembles into its first action \(a\) and \(d = \len{\stack(I_1)} \geq 1\) quasi-runs \(r_1, \ldots, r_d\).
The action \(a\) labels the root of \(t\) which has \(d\) children \(t_1\) to \(t_d\).
The subtrees \(t_1\) to \(t_d\) are defined applying the induction hypothesis on the quasi-runs \(r_1\) to \(r_d\), respectively.
From the induction hypothesis, \(t_1\) to \(t_d\) are actrees and each sequence \(\seq{t_i}\) coincide with the sequence of actions of the quasi-run \(r_i\). 
Moreover, Lemma~\ref{lemma:disassembly} shows that the state of the last ID of \(r_i\) coincides with the state of the first ID of \(r_{i+1}\) for all \(i\in\{1,\ldots,d-1\}\).
Also, the state of the first ID of \(r_1\) coincides with the state of \(I_1\). 
We conclude from above that \(t\) satisfies \eqref{eq:1} and that
\(t\) is an actree since it verifies Definition~\ref{def:tree}.

Second, given an actree \(t\) of \(P\), we define a move sequence \(r\) inductively on the height of \(t\). 
We prove at the same time that \eqref{eq:1} holds for \(r\) which we show is a quasi-run.
For the base case, we assume \(h(t) = 0\). 
Then, the root of \(t\) is a leaf labeled by an action \(a = I_0 \hookrightarrow I_1\) and we define \(r = I_0 \vdash I_1\). 
Clearly, \(r\) satisfies \eqref{eq:1} and is a quasi-run.

Now, assume that \(t\) has \(d\) children \(t_1\) to \(t_d\), we define \(r\) as follows. 
By the induction hypothesis, each subtree \(t_i\) for all \(i \in \{1, \dots,d\}\) defines a quasi-run \(r_i\) verifying \eqref{eq:1}. 
The definition of actree shows that the root of \(t\) pushes \(\beta_1\) to \(\beta_d\) which are popped by its \(d\) children.
By induction hypothesis each \(r_i\) for all \(i \in \{1, \dots,d\}\) thus starts by popping \(\beta_i\). 
Next it follows from the induction hypothesis and the definition of actree that the target state of the action given by the last move of \(r_{i}\) coincides with the source state of the action given by the first move of \(r_{i+1}\) for all \(i\in\{1,\ldots,d-1\}\).
Moreover, the target state of \(a\) coincides with the source state of the action given by the first move of \(r_1\). 
Thus, applying Lemma~\ref{lemma:assembly} to the action given by the root of \(t\) and \(r_1,\ldots,r_d\) yields the quasi-run \(r\) that satisfies \eqref{eq:1} following our previous remarks.\\
2. First, we prove that no quasi-run may be paired with more than one actree. The proof goes by contradiction. Given a move sequence \( I_0\vdash \ldots \vdash I_n\), define its sequence of actions \(a_1\ldots a_n\) such that the move \(I_{i}\vdash I_{i+1}\) is given by the action \(a_{i+1}\), for all \(i\).
Note that two quasi-runs \(r = I_0\vdash\ldots \vdash I_n\) and \(r' = I'_0 \vdash \ldots \vdash I'_m\) are \emph{equal} if{}f their sequences of actions coincide.

Suppose that given the actrees \(t\) and \(t'\) with \(t \neq t'\), there exist two quasi-runs \(r\) and \(r'\) such that \(r\) is paired with \(t\) and \(r'\) is paired with \(t'\), under the relation we described in part 1. of this proof, and \(r = r'\).
Let \(\seq{t} = a_1,\ldots,a_n\) and \(\seq{t'} =  a'_1,\ldots,a'_m\).
Let \(p\in\{1,\dots, min(n,m)\}\) be the least position in both sequences such that \(a_p \neq a'_p\).
By \eqref{eq:1}, the sequences of actions of \(r\) and \(r'\) also differ at position \(p\) (at least).
Thus, \(r \neq r'\) (contradiction).

Second, we prove that no actree may be paired with more than one quasi-run. Again, we give a proof by contradiction.

Suppose that given the quasi-runs \(r\) and \(r'\) with \(r \neq r'\), there exist two actrees \(t\) and \(t'\) such that \(t\) is paired with \(r\) and \(t'\) is paired with \(r'\), under the relation we described in part 1. of the proof, and \(t = t'\).
We rely on the standard definition of equality between labeled trees.

Suppose \( a_1\ldots a_n\) is the sequence of actions of \(r\) and \(a'_1\ldots a'_m \) is the sequence of actions of \(r'\).
Let \(p\in\{1,\dots, min(n,m)\}\) be the least position such that \(a_p \neq a'_p\).
By \eqref{eq:1}, \(\seq{t}\) and \(\seq{t'}\) also differ at position \(p\) (at least).
Then, \(t \neq t'\) (contradiction).

\qed
\end{proof}

\subsection{Example: Accepting Actree of \texorpdfstring{\(P(2,1)\)}{P(2,1)} }
We give a graphical depiction of the accepting actree \(t\) of \(P(2,1)\).
Recall that \(P(2,1)\) corresponds to the member of the family \(P(n,k)\) that has \(2\) states \(q_0\) and \(q_1\), and \(9\) stack symbols \(S, X_0, X_1, s_0, s_1, r_0, r_1, \star\) and \(\$\).
Figure~\ref{fig:troot} represents \(t\) which has been split for layout reasons.
\vspace{-.75cm}
\begin{figure}[ht]
    \centering
    \begin{subfigure}[b]{1\textwidth}
		\caption{Top of the tree \(t\)}
    \label{fig:top}
		\begin{forest}
		[${(q_0, S)\hookrightarrow_b(q_0, X_1\,r_0)}$[${(q_0, X_1) \hookrightarrow_b (q_0, X_0\,r_0\,s_0\,X_0\,r_0)}$[{\framebox[1.5\width]{\(t_1\)}}][~${(q_0, r_0) \hookrightarrow_b (q_0, \varepsilon)}$][${(q_0, s_0) \hookrightarrow_b (q_0, \varepsilon)}$][{\framebox[1.5\width]{\(t_1\)}}][${(q_0, r_0) \hookrightarrow_b (q_0, \varepsilon)}$]][${(q_0, r_0) \hookrightarrow_b (q_0, \varepsilon)}$]]
	  	\end{forest}
    \end{subfigure}

    \begin{subfigure}[b]{1\textwidth}
		\caption{Subtree \(t_1\)}
    \label{fig:tr1}
		\begin{forest}
		[${(q_0, X_0) \hookrightarrow_b (q_0, X_1\,\star)}$[${(q_0, X_1) \hookrightarrow_b (q_0, X_0\,r_1\,s_0\,X_0\,r_1)}$[{\framebox[1.5\width]{\(t_2\)}}][~${(q_1, r_1) \hookrightarrow_b (q_1, \varepsilon)}$][${(q_1, s_0) \hookrightarrow_b (q_0, \varepsilon)}$][{\framebox[1.5\width]{\(t_2\)}}][${(q_1, r_1) \hookrightarrow_b (q_1, \varepsilon)}$]][${(q_1, \star) \hookrightarrow_b (q_0, \varepsilon)}$]]
	  \end{forest}
    \end{subfigure}

    \begin{subfigure}[b]{1\textwidth}
		\caption{Subtree \(t_2\)}
    \label{fig:tr2} 
		\begin{forest}
		[${(q_0, X_0)\hookrightarrow_b(q_1, X_1\,\$)}$[${(q_1, X_1) \hookrightarrow_b (q_1, X_0\,r_0\,s_1\,X_0\,r_0)}$[{\framebox[1.5\width]{\(t_3\)}}][~${(q_0, r_0) \hookrightarrow_b (q_0, \varepsilon)}$][${(q_0, s_1) \hookrightarrow_b (q_1, \varepsilon)}$][{\framebox[1.5\width]{\(t_3\)}}][${(q_0, r_0) \hookrightarrow_b (q_0, \varepsilon)}$]][${(q_0, \$) \hookrightarrow_b (q_1, \varepsilon)}$]]
	  \end{forest}
    \end{subfigure}

    \begin{subfigure}[b]{1\textwidth}
		\caption{Subtree \(t_3\)}
		\label{fig:tr3} 
		\begin{forest}
			[${(q_1, X_0) \hookrightarrow_b (q_1, X_1\star)}$[${(q_1, X_1) \hookrightarrow_b (q_1, X_0\,r_1\,s_1\,X_0\,r_1)}$[${(q_1, X_0) {\hookrightarrow_b}(q_1, \varepsilon)}$][${(q_1, r_1) {\hookrightarrow_b}(q_1, \varepsilon)}$][${(q_1, s_1) {\hookrightarrow_b}(q_1, \varepsilon)}$][${(q_1, X_0) {\hookrightarrow_b}(q_1, \varepsilon)}$][${(q_1, r_1) {\hookrightarrow_b}(q_1, \varepsilon)}$]][${(q_1, \star) {\hookrightarrow_b}(q_0, \varepsilon)}$]]
	  	\end{forest}
    \end{subfigure}
    \caption{Accepting actree \(t\) of \(P(2,1)\) split into 4 subtrees.}\label{fig:troot}
\end{figure}
\end{document}